\documentclass[11pt]{article}
\usepackage{latexsym, amssymb, amsmath, graphicx, amsthm, url}
\usepackage[boxed,noline,linesnumbered,noend]{algorithm2e}
\usepackage[top=1in,right=1in,left=1in,bottom=1in]{geometry}
\usepackage{cite}
\usepackage{enumitem}

\newtheorem{lem}{Lemma}
\newtheorem{thm}{Theorem}

\newtheorem{defn}{Definition}
\newtheorem{prob}{Problem}

\newcommand{\one}{{\bf 1}}
\newcommand{\tbar}[1]{\mbox{%
  \sbox0{#1}\sbox2{\~{}}%
  \ooalign{\hidewidth\raise\dimexpr\ht0-\ht2+.3ex\box2 \hidewidth\cr#1\cr}}}

\title{Universal Streaming}

\author{Vladimir Braverman
\thanks{
Department of Computer Science, Johns Hopkins University.
Email:
{\tt vova@cs.jhu.edu}.  Research supported in part by
DARPA grant N660001-1-2-4014. Its contents are solely the responsibility of the author and do not
represent the official view of DARPA or the Department of Defense.}
\and
Rafail Ostrovsky\thanks{
Department of Computer Science and Department of Mathematics, University of California, Los Angeles.
Email: {\tt rafail@cs.ucla.edu}.  Research supported in part by NSF grants CNS-0830803; CCF-0916574; IIS-1065276; 
CCF-1016540; CNS-1118126; CNS-1136174; US-Israel BSF grant 2008411, OKAWA Foundation Research Award, IBM 
Faculty Research Award, Xerox Faculty Research Award, B. John Garrick Foundation Award, Teradata 
Research Award, and Lockheed-Martin Corporation Research Award. This material is also based upon 
work supported by the Defense Advanced Research Projects Agency through the U.S. Office of Naval 
Research under Contract N00014-11-1-0392. The views expressed are those of the author and do 
not reflect the official policy or position of the Department of Defense or the U.S. Government.}
\and
Alan Roytman\thanks{
Department of Computer Science, University of California, Los Angeles.
Email: {\tt alanr@cs.ucla.edu}.}
}
\date{}

\begin{document}

\maketitle
\thispagestyle{empty}
\begin{abstract}
Given a stream of data, a typical approach in streaming algorithms is to design a sophisticated algorithm
with small memory that computes a specific statistic over the streaming data.  Usually, if one wants
to compute a different statistic {\em after} the stream is gone, it is impossible. But
what if we want to compute a different statistic after the fact?  In this paper, we consider the
following fascinating possibility: can we collect some small amount of {\em specific} data
during the stream that is ``universal,'' i.e., where we do not know anything about the statistics we
will want to {\em later} compute, other than the guarantee that had we known the statistic ahead of time,
it would have been possible to do so with small memory?  In other words, is it possible to collect
some data in small space during the stream, such that any other statistic that can be computed with comparable
space can be computed after the fact?  This is indeed what we introduce (and show) in this paper
with matching upper and lower bounds: we show that it is possible to collect {\em universal}
statistics of polylogarithmic size, and prove that these universal statistics allow us {\em after the fact}
to compute all other statistics that are computable with similar amounts of memory. We show that this is
indeed possible, both for the standard unbounded streaming model and the sliding window streaming model.
\end{abstract}

\section{Introduction}
With the vast amount of data being generated today, algorithms for data streams continue
to play an important role for many practical applications.
As the amount of data being generated
continues to grow at a staggering rate, streaming algorithms are increasingly becoming more important as a practical tool
to analyze and make sense of all the information.
Data streams have received a lot of attention
with good reason, as evidenced by the wide array of applications discussed in~\cite{A07,M05}.
Applications for streaming algorithms
which operate over input that arrives on the fly and use a small
amount of memory
are numerous, ranging from monitoring packets flowing across a network to analyzing
patterns in DNA sequences.  In practice, such applications generate vast amounts of data in a very short
period of time, so it is infeasible to store all of this information.
This presents a pressing
question: when is it possible to avoid storing all the information while still providing approximate
solutions with good theoretical guarantees?

Typically, algorithms are developed for data streams in the unbounded model, where some
statistic is maintained over the entire history of the stream.  For certain applications, it is useful
to only compute such statistics over recent data.  For instance, we may wish to analyze
stock market transactions in a particular timeframe~\cite{OMMGM02} or monitor packets
transmitted over a network in the last hour to identify suspicious activity~\cite{VSGB05}.
This framework is known as the sliding window model, where we maintain statistics
over the current window of size at most $N$, which slides as time progresses.  In the sequence-based model,
exactly one element arrives and expires from the window per time step.
In the timestamp-based model, any number of elements may arrive or expire.
Clearly, the timestamp-based model is more general.

In a landmark paper that influenced the field as a whole, the work of Alon, Matias and Szegedy~\cite{AMS96}
studied the following fundamental framework.
For a universe $U = \{1,\ldots,n\}$ and an input stream (i.e., a sequence of integers drawn from $U$),
let $M = (m_1,\ldots,m_n)$ be the vector where each entry $m_i$ denotes the frequency
with which element $i \in U$ appears in the stream.  At any point in time, the paper of~\cite{AMS96}
showed how to approximate various frequency moments in sublinear space.  Informally, for the $k^{th}$ frequency
moment $F_k = \sum_{i\in U} m_i^k$, it was shown that $F_0,F_1,$ and $F_2$ can be approximated in polylogarithmic space,
while for $k > 2$, an upper bound of $O^*(n^{1-{1/k}})$ was shown (the notation $O^*(f(n))$ hides polylogarithmic factors).
Moreover, a lower bound of $\Omega(n^{1-{5/k}})$ was shown for every $k \geq 6$.  As discussed in~\cite{AMS96},
such frequency functions are tremendously important in practice and have many applications in databases, as
they indicate the degree to which the data is skewed.
The fundamental work of Indyk and Woodruff~\cite{IW05} showed how to compute $F_k$ for $k>2$ in
space $O^*(n^{1-2/k})$, which was the first optimal result for such frequency moments.
Their technique reduced the problem of computing $F_k$ to computing heavy hitters, and indeed
our construction builds on their methods.
Their result was later improved up to polylogarithmic
factors by Bhuvanagiri, Ganguly, Kesh and Saha~\cite{BGKS06}.
Recently, Li, Nguy\tbar{\^e}n, and Woodruff~\cite{LNW14} showed that any one-pass streaming
algorithm that approximates an arbitrary function in the turnstile model can be implemented
via linear sketches.  Our work is related, since our algorithms are based on linear sketches
of~\cite{AMS96}.

Such works have opened a line of research
that is still extremely relevant today.  In particular,
what other types of frequency-based functions
admit efficient solutions in the streaming setting, and which functions are inherently difficult to approximate?
In our paper, we strive to answer this question for frequency-based, monotonically increasing
functions in the sliding window model.  We make progress on two significant, open problems
outlined in~\cite{sub30} by Nelson and~\cite{sub20} by Sohler.  Specifically, we are the first
to formalize the notion of universality for streaming over sliding windows.  Our main result is
the construction of a universal algorithm in the timestamp-based sliding window model for a broad class of
functions.  That is, we define a class of functions and design a single
streaming algorithm that produces a data structure with the following guarantee.  When querying
the data structure with a function $G$ taken from the class, our algorithm approximates $\sum_{i=1}^nG(m_i)$
without knowing $G$ in advance (here, $m_i$ denotes the frequency that element $i$ appears in the window).
Our data structure only collects statistics based on
the input stream itself without knowing $G$.  The algorithm uses polylogarithmic memory in the
universe size $n$ and the window size $N$, and obtains a $(1 \pm \epsilon)$-approximation.
This is precisely the notion of universality that we develop in our paper, and
it is an important step forward towards resolving the problem in~\cite{sub30}.

Along the way, we design a zero-one law for a broader class of
monotonically increasing functions $G$ which are zero at the origin that specifies when
$\sum_{i=1}^n G(m_i)$ can be approximated with high probability in one pass, using polylogarithmic memory.
If $G$ satisfies the conditions specified by the test, then given the function $G$ we construct an
explicit, general algorithm that is able to approximate the summation to within a $(1 \pm \epsilon)$-factor
using polylogarithmic memory.  If the function $G$ does not pass the test, then we provide a lower
bound which proves it is impossible to do so.
This result generalizes the work of~\cite{BO10} to the sliding window setting,
and makes important progress towards understanding the question posed in~\cite{sub20}.

\subsection{Contributions and Techniques}
Our contributions in this paper make progress on and give insight into two important problems:
\begin{enumerate}[itemsep=0mm]
\item We are the first to formally define the notion of universality in the streaming setting.  We define
a large class of functions $\mathcal{U}$ such that, for the entire class, we design a single, universal algorithm
for data streams in the sliding window model which maintains a data structure with the following guarantee.
When the data structure is queried with any function $G \in \mathcal{U}$, it outputs a
$(1 \pm \epsilon)$-approximation of $\sum_{i=1}^nG(m_i)$ without knowing $G$ in advance (note
that the choice of $G$ can change).  Our algorithm uses polylogarithmic memory (in $n$ and $N$), makes one
pass over the stream, and succeeds with high probability.

\item We give a complete, algebraic characterization (i.e., a zero-one law) for the class of tractable functions over
sliding windows.  We define a broader set of functions $\mathcal{T}$
such that, for any non-decreasing function $G$ where $G(0) = 0$, if $G \in \mathcal{T}$, then we have an algorithm that gives
a $(1 \pm \epsilon)$-approximation to $\sum_{i=1}^n G(m_i)$, uses polylogarithmic memory, makes one pass
over the stream, and succeeds with high probability.  Moreover, if $G \not\in \mathcal{T}$,
we give a lower bound which shows that super-polylogarithmic memory is necessary in order to approximate
$\sum_{i=1}^nG(m_i)$ with high probability.  This generalizes the result of~\cite{BO10}
to the sliding window setting.
\end{enumerate}

Our algorithms work in the timestamp-based sliding window model and maintain the summation approximately for every window.
The value $\epsilon$ can depend on $n$ and $N$, so that
the approximation improves as either parameter increases. Our construction
is very general, applying to many functions using the same techniques.
In particular, streaming algorithms tend to depend specifically on the function
to be approximated.  For instance, consider the specific algorithms for $F_2$~\cite{AMS96, I06},
$F_0$~\cite{FM85, CDIM03, AMS96} and $L_p, 0\le p<2$~\cite{I06, L09, KNW10Soda} (where the $L_p$
norm is the $p^{th}$ root of $F_p$).  The problems we study have been open
for several years, and our construction and proofs are
non-trivial.  Surprisingly, despite us using existing techniques, their solutions have remained
elusive.  The techniques we use allow us to avoid the strong pseudorandom
generator machinery developed by Nisan~\cite{N90}.  In fact, our construction only assumes
$4$-wise independence, which might be useful in practice.

For our main result, item $1$, it is useful to understand our techniques for solving
item $2$.  Designing the correct characterization for ``tractable" functions is in itself a challenging task.
Indeed, one may think that the predicate from~\cite{BO10} is sufficient for designing an algorithm
in the sliding window model.  Unfortunately, this idea is difficult to carry through, and with good reason:
it turns out to be false!  Part of the novelty and difficulty of our techniques is the identification of an extra smoothing
assumption about the class of tractable functions over sliding windows.  If a function
does not satisfy our smoothing assumption, we show a super-polylogarithmic lower bound,
inspired by the proof of~\cite{DGIM02}.
We draw on the techniques of~\cite{BO10,BO13,IW05} for our positive result by first finding
heavy elements according to the function $G$, and then reducing the sum problem to the heavy elements
problem.  Our work sheds light on the question posed in~\cite{sub20},
by exhibiting a strict separation result between the unbounded and sliding window models.

To obtain our main result, we observe that one can remove the assumption from our initial constructions
that $G$ is given up front (so that all applications of $G$ happen at the end of the window).  However,
some technical issues arise, as our construction relies on some parameters of $G$ that stem from our
zero-one law.  To address these issues, we parameterize our class of functions $\mathcal{U}$ by a constant,
allowing us to build a single algorithm to handle the entire parameterized class.

\subsection{Related Work}
The paper of Braverman and Ostrovsky~\cite{BO10} is the most closely related to our paper.
We extend their result from the unbounded model to the timestamp-based sliding window model
(by formalizing a new characterization of tractable functions) and by designing a universal
algorithm for a large class of functions.  Our results build on~\cite{BO10,BO13,IW05}.

Approximating frequency moments and $L_p$ norms is
well studied in the literature, as it has many applications,
and there are indeed a vast number of papers on the subject.  Compared to such works,
we make minimal assumptions and our results are extremely broad, as we design general algorithms that can not
only handle frequency moments, but other functions as well.
Flajolet and Martin~\cite{FM85} gave an algorithm to approximate $F_0$
(i.e., counting distinct elements),
and Alon, Matias, and Szegedy~\cite{AMS96} showed how to approximate $F_k$ for
$0 \leq k \leq 2$ using polylogarithmic memory, while for $k > 2$ they showed how to approximate
$F_k$ using $O^*(n^{1-1/k})$ memory.  They also showed an $\Omega(n^{1-5/k})$ lower bound for
$k \geq 6$.  Indyk~\cite{I06} used stable distributions to approximate $L_p$ norms for
$p \in (0, 2]$.  Indyk and Woodruff~\cite{IW05} gave the first optimal algorithm
for $F_k$ ($k > 2$), where an $O^*(n^{1-2/k})$ upper bound was developed.  In a
followup work, Bhuvanagiri, Ganguly, Kesh, and Saha~\cite{BGKS06} improved the space by
polylogarithmic factors.  For lower bounds, Bar-Yossef, Jayram, Kumar, and
Sivakumar~\cite{BJKS02} gave an $\Omega(n^{1 - (2+\epsilon)/k})$ lower bound, which was
later improved to $\Omega(n^{1-2/k})$ by Chakrabarti, Khot, and Sun~\cite{CKS03} for any
streaming algorithm that makes one pass over the stream.  The literature is vast, and other
results for such functions
include~\cite{IW03,W04,BJKST02,CK04,CDIM03,FKSV99,G04,GC07,L09,KNW10Soda,KNW10Pods}.
There is also literature on finding frequent items, and this is indeed a problem we must solve
to achieve our results (although we must do so for a broad class of functions).
Examples of works which find frequent items include~\cite{CCF02,CH08,CM05ELS}.
Moreover, there has been a line of work in the literature
on estimating entropy and entropy norms, including~\cite{BG06,CBM06,CCM07,GMV06,HNO08,LSOXZ06}.

There is also a vast literature in streaming for sliding
windows.  In their foundational paper, Datar, Gionis, Indyk, and Motwani~\cite{DGIM02}
gave a general technique called exponential histograms that allows many fundamental
statistics to be computed in optimal space, including count, sum
of positive integers, average, and the $L_p$ norm for $p \in [1,2]$.
Gibbons and Tirthapura~\cite{GT02} made improvements for the sum and count problem
with algorithms that are optimal in space and time.  Braverman and Ostrovsky~\cite{BO07}
gave a general framework for a large class of smooth functions, which include the $L_p$ norm
for $p > 0$.  Our work complements their results, as
the functions they studied need not be frequency based.  Other problems include frequent itemsets~\cite{CWYM04}, frequency counts and
quantiles~\cite{AM04,LT06PODS}, rarity and similarity~\cite{DM02}, variance and $k$-median~\cite{BDMO03},
diameter in multidimensional space and other geometric problems~\cite{FKZ05,CS04,AHV05}, and uniform
random sampling~\cite{BDM02}.  Many works have studied frequency estimation and frequent item identification,
including~\cite{GDDLM03,JQSYZ03,CM05TRANS}, along with $L_1$-frequent elements,
including~\cite{HT08,ZG08,BAE07,HLT10,NL05}.  The recent work of~\cite{BGO13} gave an efficient
algorithm for computing $L_2$-frequent elements over sliding windows.  Many of our constructions rely
on computing frequent elements, but we must do so under a broad class of functions.

\subsection{Roadmap}
In Section~\ref{sec:prob}, we describe notation used throughout this paper, give some definitions, and formalize the main
problems we study.  In Section~\ref{sec:lb}, we give a lower bound for functions that are not tractable (i.e., we show the ``zero"
part of our zero-one law).  In Section~\ref{sec:tract}, we give an algorithm for any tractable function (i.e., we show the ``one" part
of our zero-one law).  Finally, in Section~\ref{sec:universality}, we show the main result of this paper by giving a universal
streaming algorithm.

\section{Notation and Problem Definition}
\label{sec:prob}
We have a universe of $n$ elements $[n]$, where $[n] = \{1,\ldots,n\}$, and an integer $N$.
A stream $D(n,N)$ is a (possibly infinite) sequence of integers $a_1,a_2,\ldots$, each from
the universe $[n]$, where $N$ is an upper bound on the size of the sliding window.  Specifically,
at each time step, there is a current window $W$ that contains \emph{active} elements, where $|W| \leq N$.
The window $W$ contains the most recent elements of the stream,
and elements which no longer belong in the window are \emph{expired}.  We use the
timestamp-based model for sliding windows (i.e., any number of elements from the stream
may enter or leave the window at each time step).  We denote the frequency vector
by $M(W)$, where $M(W) = (m_1,\ldots,m_n$) and each $m_i$ is the frequency of element $i \in [n]$
in window $W$ (i.e., $m_i = |\{j \ | \ a_j = i \wedge j \textrm{ is active}\}|$).  For the
window $W$, the $k^{th}$ frequency moment $F_k(M(W)) = \sum_{i=1}^n m_i^k$.
For a vector $V = (v_1,\ldots,v_n)$, we let $|V|$ be the $L_1$-norm of $V$, namely
$|V| = \sum_i |v_i|$.  For a vector $V = (v_1,\ldots,v_n)$ and a function $f$, we define the
$f$-Vector as $f(V) = (f(v_1),\ldots,f(v_n))$.

We say that $x$ is a $(1 \pm \epsilon)$-approximation of $y$ if $(1 - \epsilon)y \leq x \leq (1 + \epsilon)y$.  We define $O^*(f(n,N)) =~
O(\log^{O(1)}(nN) f(n,N))$.  We say a probability $p$ is negligible if $p = O^*\left(\frac{1}{nN}\right)$.  Consider the following problem:

\begin{prob}[$G$-Sum]
Let $G : \mathbb{R} \rightarrow \mathbb{R}$ be an arbitrary function which is non-decreasing such that $G(0) = 0$.  For any stream
$D(n,N)$, any $k$, and any $\epsilon = \Omega\left(\frac{1}{\log^k(nN)}\right)$, output a $(1 \pm \epsilon)$-approximation of $\sum_{i=1}^n G(m_i)$ (where $m_i$ is the multiplicity
of element $i \in [n]$) for the current window $W$.
\end{prob}

We first give some definitions which will be useful throughout the paper and help us define our notion
of tractability, beginning with the local jump:

\begin{defn}[Local Jump]
For $\epsilon > 0$ and $x \in \mathbb{N}$, we define the local jump as
$$\pi_\epsilon(x) = \min{\{x, \min{\{z \in \mathbb{N} \ | \ (G(x + z) > (1+\epsilon)G(x)) \vee (G(x - z) < (1 - \epsilon)G(x)) \}} \}} .$$
\end{defn}
\noindent That is, $\pi_\epsilon(x)$ is essentially the minimum amount needed to cause $G$ to jump by a
$(1 \pm \epsilon)$-factor by shifting either to the left or to the right of $x$.

\begin{defn}[Heavy Element]
For a vector $M(W) = (m_1,\ldots,m_n)$, a function $f$, and a parameter $d > 0$, we say that element $i$ is $(f,d)$-heavy with respect to $M(W)$ if
$f(m_i) > d\sum_{j \neq i}f(m_j)$.
\end{defn}

\begin{defn}[Sampled Substream]
Let $D(n,N)$ be a stream and $H:[n] \rightarrow \{0,1\}$ be a function.  We denote by $D_H$ the sampled substream of $D$ consisting of
all elements that are mapped to 1 by the function $H$.  More formally, $D_H = D \cap H^{-1}(1)$.
\end{defn}

\begin{defn}[Residual Second Moment]
If there is an $(F_2,1)$-heavy element $m_i$ with respect to $M(W)$, we define the residual second moment as
$F_2^{res}(M(W)) = F_2(M(W)) - m_i^2 = \sum_{j \neq i}m_j^2$.
\end{defn}

We are now ready to define our zero-one law.
\begin{defn}[Tractability]\label{defn:tractability}
We say a function $G$ is tractable if $G(1) > 0$ and:

\vspace{1mm}
$\forall k \ \exists N_0,t \ \forall x,y \in \mathbb{N^+} \ \forall R \in \mathbb{R^+} \ \forall \epsilon:$
\vspace{-4mm}
\begin{equation}\label{eqn:tract1}
\left(R > N_0, \frac{G(x)}{G(y)} = R, \epsilon > \frac{1}{log^k(Rx)} \right) \Rightarrow \left( \left(\frac{\pi_\epsilon(x)}{y}\right)^2 \geq \frac{R}{log^t(Rx)} \right) \hspace{3mm} and
\end{equation}
\vspace{-2mm}
\begin{equation}\label{eqn:tract2}
\hspace{-2.3755in}\forall k \ \exists r,N_1 \ \forall x \geq N_1 \ \forall \epsilon : \epsilon > \frac{1}{log^k(x)} \Rightarrow \pi_\epsilon(x) \geq \frac{x}{\log^r(x)}. \\
\end{equation}

\end{defn}
We let $\mathsf{Tractable}$ be the set of functions which satisfy the above predicate.
Based on this definition, we formalize the notion of tractability for our universal setting.  It is similar to the definition of tractability,
except we need to upper bound some parameters by a constant.

\begin{defn}[Universal Tractability]\label{defn:utractability}
Fix a constant $C$.  Let $\mathcal{U}(C)$ denote the set of non-decreasing functions $G$ where $G(0) = 0$,
$G(1) > 0$, and:

\vspace{1mm}
$\forall k \ \exists N_0,t \leq C \ \forall x,y \in \mathbb{N^+} \ \forall R \in \mathbb{R^+} \ \forall \epsilon:$
\vspace{-4mm}
\begin{equation}
\left(R > N_0, \frac{G(x)}{G(y)} = R, \epsilon > \frac{1}{log^k(Rx)} \right) \Rightarrow \left( \left(\frac{\pi_\epsilon(x)}{y}\right)^2 \geq \frac{R}{log^t(Rx)} \right) \hspace{3mm} and
\end{equation}
\vspace{-2mm}
\begin{equation}
\hspace{-2.049in}\forall k \ \exists r \leq C,N_1 \ \forall x \geq N_1 \ \forall \epsilon : \epsilon > \frac{1}{log^k(x)} \Rightarrow \pi_\epsilon(x) \geq \frac{x}{\log^r(x)}. \\
\end{equation}

\end{defn}

\begin{defn}[Universal Core Structure]\label{defn:ucs}
For a fixed vector $V = (v_1,\ldots,v_n)$, we say a data structure $S$ is a universal core structure with
parameters $\epsilon > 0$, $\delta > 0$, $\alpha > 0$, and a class of functions $\mathcal{G}$,
where $G \in \mathcal{G}$
satisfies $G:\mathbb{R} \rightarrow \mathbb{R}$, if given any $G \in \mathcal{G}$, $S$ outputs a set
$T = \{(x_1,j_1),\ldots,(x_\ell,j_\ell)\}$ such that with probability at least $1-\delta$ we have:
$1)$ For each $1 \leq i \leq \ell$, $(1-\epsilon)G(v_{j_i}) \leq x_i \leq (1+\epsilon)G(v_{j_i})$, and
$2)$ If there exists $i$ such that $v_i$ is $(G,\alpha)$-heavy with respect to $V$, then
$i \in \{j_1,\ldots,j_\ell\}$.
\end{defn}

\begin{defn}[Universal Core Algorithm]\label{defn:uca}
We say an algorithm $\mathcal{A}$ is a universal core algorithm with parameters $\epsilon > 0$, $\delta > 0$,
$\alpha > 0$,
and a class of functions $\mathcal{G}$, where $G \in \mathcal{G}$
satisfies $G:\mathbb{R} \rightarrow \mathbb{R}$, if, given any stream $D(n,N)$ as input,
$\mathcal{A}$ outputs a universal core structure for the vector $M(W)$ with the same parameters
$\epsilon$, $\delta$, $\alpha$, and~$\mathcal{G}$.
\end{defn}

\begin{defn}[Universal Sum Structure]\label{defn:uss}
For a fixed vector $V = (v_1,\ldots,v_n)$, we say a data structure $S$ is a universal sum structure with
parameters $\epsilon > 0$, $\delta > 0$, and a class of functions $\mathcal{G}$, where $G \in \mathcal{G}$
satisfies $G:\mathbb{R} \rightarrow \mathbb{R}$, if given any $G \in \mathcal{G}$, $S$ outputs a value $x$
such that with probability at least $1-\delta$ we have:
$(1-\epsilon)\sum_{i=1}^nG(v_i) \leq x \leq (1+\epsilon)\sum_{i=1}^nG(v_i)$.
\end{defn}

\begin{defn}[Universal Sum Algorithm]\label{defn:usa}
We say an algorithm $\mathcal{A}$ is a universal sum algorithm with parameters $\epsilon > 0$, $\delta > 0$,
and a class of functions $\mathcal{G}$, where $G \in \mathcal{G}$
satisfies $G:\mathbb{R} \rightarrow \mathbb{R}$, if, given any stream $D(n,N)$ as input,
$\mathcal{A}$ outputs a universal sum structure for the vector $M(W)$ with parameters $\epsilon$, $\delta$,
and~$\mathcal{G}$.
\end{defn}

In this paper, our main result is the proof of the following theorem:
\begin{thm}\label{thm:mainusum}
Fix a constant $C$ and let $\mathcal{U}(C)$ be the universally tractable set according to Definition~\ref{defn:utractability}.
There is a universal sum algorithm with parameters
$\epsilon = \Omega(1/\log^k(nN))$ (for any $k \geq 0$), $\delta = 0.3$,
and $\mathcal{G} = \mathcal{U}(C)$.  The algorithm uses polylogarithmic space in $n$ and $N$, and
makes a single pass over the input stream $D(n,N)$.
\end{thm}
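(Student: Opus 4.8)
The plan is to build the universal sum algorithm in three layers, mirroring the reduction used in the unbounded case by \cite{BO10,IW05} but carrying everything over to the timestamp-based sliding window model and removing the assumption that $G$ is known in advance. First I would establish the \emph{universal core algorithm}: a single streaming structure that, when later queried with any $G \in \mathcal{U}(C)$, reports all $(G,\alpha)$-heavy elements (for a suitable $\alpha = 1/\mathrm{polylog}(nN)$) together with $(1\pm\epsilon)$-approximations of their frequencies. The key observation here is that heaviness with respect to $G$ can be detected without knowing $G$: by the structure of Definition~\ref{defn:utractability}, a $(G,\alpha)$-heavy element must already be a heavy element with respect to $F_2$ (or, more precisely, at a sufficiently coarse sampling level it dominates the residual second moment), so we can run a hierarchy of $L_2$-heavy-hitter sketches at geometrically decreasing sampling rates (using the $4$-wise independent sketches of \cite{AMS96} and the sliding-window machinery of \cite{BGO13,DGIM02}) and retain for each surviving candidate enough information to recover its frequency. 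Because we only keep $\mathrm{polylog}(nN)$ sampling levels and $\mathrm{polylog}(nN)$ counters per level, the space is polylogarithmic; the parameter $C$ is what makes the number of levels a fixed polylogarithm independent of which $G$ is eventually asked.

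Second, I would turn the core structure into a \emph{universal sum structure} by the standard ``recursive sketch'' / heavy-elements-to-moment reduction of \cite{IW05} as adapted in \cite{BO10}: at query time, given $G$, evaluate $G$ on the reported heavy frequencies, subtract their contribution, and estimate the contribution of the light elements by a random substream argument. Concretely, one maintains the universal core structure on $O(\log n)$ independent substreams obtained by subsampling the universe, so that at query time the light part of $\sum_i G(m_i)$ on the full stream can be estimated from the (now comparatively heavy) contributions on the subsampled streams, scaled up appropriately. The tractability predicate~\eqref{eqn:tract1} is exactly what guarantees that when an element fails to be heavy at one sampling level it becomes heavy at the next level with the right polynomial loss, so the recursion bottoms out after $O(\log n)$ steps with only a $(1\pm\epsilon)$ aggregate error; condition~\eqref{eqn:tract2} (the smoothing assumption) is what lets us discretize the frequency values and argue that the estimates of $G(m_i)$ on the heavy elements, which are only $(1\pm\epsilon)$-approximations of $m_i$, translate into $(1\pm O(\epsilon))$-approximations of $G(m_i)$. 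Summing the error contributions and rescaling $\epsilon$ by a constant gives the claimed $\delta = 0.3$ failure probability after a constant number of independent repetitions and a median.

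Third, I would verify the sliding-window bookkeeping: every sketch used (the $L_2$ heavy-hitter sketches at each level, and the counters recovering frequencies) must be maintained over the active window $W$ rather than the whole prefix. For this I would invoke the smooth-histogram / exponential-histogram framework of \cite{BO07,DGIM02} together with the sliding-window $L_2$-heavy-elements algorithm of \cite{BGO13}, noting that all the quantities we track ($F_2$, residual $F_2$, subsampled versions thereof) are smooth in the sense required, so maintaining $O^*(1)$ snapshots per sketch suffices to answer for the current window with a multiplicative $(1\pm\epsilon)$ guarantee. Since $\epsilon = \Omega(1/\log^k(nN))$ and all our loss factors are fixed polylogarithms in $nN$, the final accuracy and the $\mathrm{polylog}(nN)$ space bound both go through.

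The main obstacle I anticipate is the interface between the two tractability conditions in the sliding-window setting: in the unbounded model of \cite{BO10} the heavy-element sampling recursion and the per-element approximation are cleanly separated, but over sliding windows the ``heaviness'' of an element at a given sampling level can change as the window slides, so we must ensure that the smooth-histogram snapshots are synchronized across all $O(\log n)$ sampling levels and that an element which is heavy for the current window is captured by the snapshot structure at \emph{some} level for \emph{that} window. Making this robust — essentially proving a sliding-window analogue of the heavy-element recovery guarantee that holds simultaneously for every window and every $G \in \mathcal{U}(C)$ — is where the constant $C$ and the extra smoothing assumption of Definition~\ref{defn:utractability} do the real work, and where I expect the bulk of the technical effort to lie.
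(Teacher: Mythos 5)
Your proposal is correct and follows essentially the same route as the paper: a universal core structure that finds heavy candidates with $F_2$/residual-$F_2$ sliding-window sketches (smooth histograms), stores their approximate frequencies, and defers applying $G$ to query time (with the constant $C$ capping the polylogarithmic exponents so that parameters such as $\epsilon'$ and the number of hash buckets can be fixed without knowing $G$), combined with the Indyk--Woodruff/Braverman--Ostrovsky recursive-subsampling reduction from the sum problem to the core problem. Apart from packaging differences (the paper's core uses a single level of pairwise-independent hashing into polylogarithmically many buckets with a per-bucket hybrid-major test, and its per-element guarantee comes from a query-time verification of $G$ at $a \pm (b + 2\epsilon' a)$ using the stored residual estimate $b$, rather than a discretization argument; also, the detectability of $G$-heavy elements rests on the lemma lower-bounding the squared local jump against the residual second moment, which is the precise form of your ``dominates the residual second moment'' claim), the decomposition and the roles of the two tractability conditions match the paper's proof.
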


\noindent We can reduce the constant failure probability to inverse polynomial via standard methods.

Along the way, we also design a zero-one law (i.e., a test) which, given a function $G$,
determines if it is possible to solve the $G$-Sum problem
using polylogarithmic space in $n$ and $N$ while making one pass over the stream $D$.  If
$G$ passes the test, we give an explicit algorithm which achieves a $(1 \pm \epsilon)$-approximation
except with negligible probability (making only one pass and using polylogarithmic memory).
To formalize our other main result, we define the following class {\sf STREAM-POLYLOG}:
\begin{defn}[{\sf STREAM-POLYLOG}]\label{def: strplg}
We say function $G \in$ {\sf STREAM-POLYLOG} if $\forall k = O(1)$, $\exists t=O(1)$ and
an algorithm $\mathcal{A}$ such that for any universe size $n$, window size $N$, $\epsilon \ge 1/\log^k(nN)$,
and stream $D(n,N)$:
$1)$ $\mathcal{A}$ makes one pass over $D$, $2)$ $\mathcal{A}$ uses $O(\log^{t}(nN))$ space, and $3)$
For any window $W$, $\mathcal{A}$ maintains a $(1 \pm \epsilon)$-approximation of $|G(M(W))|$
except with probability at most $0.3$.
\end{defn}

\noindent Note that the constant error probability can be made to be as small as an inverse polynomial
by standard techniques.  Our other main result is the following theorem:
\begin{thm}\label{thm:zeroone}
Let $G$ be a non-decreasing function such that $G(0) = 0$:
$G\in \mbox{{\sf STREAM-POLYLOG}} \Longleftrightarrow G \in \mbox{\sf Tractable}$.
\end{thm}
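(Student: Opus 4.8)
The plan is to prove the two directions of the equivalence separately, treating them as essentially independent arguments that together form the zero-one law.

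\textbf{The ``one'' direction ($G \in \mathsf{Tractable} \Rightarrow G \in \mathsf{STREAM}\text{-}\mathsf{POLYLOG}$).} This is the algorithmic (upper bound) part, and I would structure it following the heavy-hitters-to-sum reduction of Indyk--Woodruff~\cite{IW05} adapted to sliding windows as in~\cite{BO10,BO13}. First I would fix $k$, obtain the constants $N_0, t$ from condition~\eqref{eqn:tract1} and $r, N_1$ from condition~\eqref{eqn:tract2}, and set up a logarithmic number of sampling levels, where at level $j$ we subsample the universe so that roughly $n/2^j$ elements survive (this is the Sampled Substream construction). On each sampled substream we run a sliding-window heavy-elements subroutine: the key point is that condition~\eqref{eqn:tract1} is exactly what guarantees that a $(G,\alpha)$-heavy element is also heavy with respect to an $L_2$-type surrogate at the appropriate scale, so an $L_2$-heavy-hitters structure over sliding windows (e.g.\ the machinery behind~\cite{BGO13}, or a smooth-histogram approach à la~\cite{BO07}) can find it; and condition~\eqref{eqn:tract2}, the ``smoothing'' assumption, guarantees that $\pi_\epsilon(x)$ is not too small relative to $x$, which is what lets the sliding-window error from partially-expired elements stay under control (since an element's contribution changes slowly as it expires). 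I would then argue that summing the recovered contributions $G(m_i)$ across levels, with appropriate scaling by $2^j$, yields a $(1\pm\epsilon)$-approximation of $\sum_i G(m_i)$ via a standard layered/importance-sampling analysis, and that the whole construction uses only $4$-wise independent hash functions and hence polylogarithmic space. The technical heart here is verifying that the sliding-window heavy element subroutine composes correctly with the sampling levels under the two tractability conditions.

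\textbf{The ``zero'' direction ($G \notin \mathsf{Tractable} \Rightarrow G \notin \mathsf{STREAM}\text{-}\mathsf{POLYLOG}$).} This is the lower bound, and $G \notin \mathsf{Tractable}$ means either $G(1) = 0$ (trivial: $G$ is zero on a large range and the problem degenerates, or rather one handles this edge case directly), or the negation of~\eqref{eqn:tract1}, or the negation of~\eqref{eqn:tract2}. For the negation of~\eqref{eqn:tract1} I would reduce from a communication complexity problem in the spirit of~\cite{BO10}: if there exist $x, y, R$ with $G(x)/G(y) = R$ large but $\pi_\epsilon(x)^2/y^2$ small, then one can encode a hard instance (set-disjointness or a promise multi-party variant) where a single heavy-ish element contributes $G(x) = R \cdot G(y)$ and is surrounded by $\Theta(R)$ copies of value $y$, so that distinguishing its exact frequency to within the tolerance forced by $\pi_\epsilon$ requires super-polylogarithmic communication, hence memory. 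For the negation of~\eqref{eqn:tract2} — the genuinely new case in the sliding window setting — I would use the construction ``inspired by the proof of~\cite{DGIM02}'': if $\pi_\epsilon(x) < x/\log^r(x)$ for all $r$ (infinitely often), then there is a scale at which $G$ is very sensitive to small relative changes in frequency; I would build a stream where the window boundary sweeps across a block of $x$ copies of some element, so that the number of active copies decreases one at a time, and show that tracking $G$ of that count to within $(1\pm\epsilon)$ through the sweep requires distinguishing super-polylogarithmically many states, forcing super-polylogarithmic memory by an encoding/counting argument over windows.

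\textbf{Main obstacle.} The hardest part is the ``zero'' direction under the negation of~\eqref{eqn:tract2}: one must show that the failure of the smoothing condition — which is an asymptotic, ``for all $r$'' statement — can be localized to a single window size $N$ and a single concrete stream on which a low-memory algorithm provably fails, and this requires carefully choosing which $k$ (and hence which $\epsilon = 1/\log^k(\cdot)$ regime) to attack and threading the quantifiers so that the adversarial stream genuinely lies in the model $D(n,N)$. On the upper bound side, the analogous subtlety is that the algorithm must be a single algorithm oblivious to the specific constants $N_0, t, r, N_1$ up to the dependence on $k$; I expect to handle this exactly as the paper suggests for the universal case, by noting the space bound degrades gracefully in those constants, so that for each fixed $k$ a polylogarithmic bound holds. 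I would present the two directions as two subsections, citing the core heavy-element and smooth-histogram subroutines as black boxes wherever their internals are orthogonal to the zero-one characterization.
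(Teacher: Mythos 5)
Your ``one'' direction is essentially the paper's route (detect heavy elements via $L_2$/smooth-histogram machinery over sliding windows, with Predicate~\eqref{eqn:tract1} linking $(G,1)$-heaviness to residual-$F_2$ heaviness and Predicate~\eqref{eqn:tract2} absorbing the $(1\pm\epsilon')$ error in the frequency estimate, then an Indyk--Woodruff/\cite{BO13}-style subsampling reduction from $G$-Sum to $G$-Core), and deferring the negation-of-\eqref{eqn:tract1} case to the lower bound of \cite{BO10} is also what the paper does. The genuine gap is in the part you yourself flag as the new one: the lower bound when Predicate~\eqref{eqn:tract2} fails. Your construction --- a single block of $x$ copies swept by the window boundary, with the count decreasing one at a time --- together with the inference ``distinguishing super-polylogarithmically many states forces super-polylogarithmic memory,'' does not give the theorem. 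Memory in bits is the \emph{logarithm} of the number of distinguishable memory states, so even forcing the algorithm to resolve the count at granularity $\pi_{\epsilon}(x)$, i.e.\ to separate $x/\pi_{\epsilon}(x) > \log^r(x)$ levels, costs only $O(r\log\log x)$ bits; and for a single fixed stream nothing must be separated at all, since the algorithm has already seen the arrival times and a space lower bound needs a \emph{family} of inputs that must be pairwise distinguished at some future window. The paper's proof supplies exactly this missing combinatorial freedom: over a universe of size one it builds $\binom{x_i}{\lfloor x_i/\pi_{\epsilon_i}(x_i)\rfloor}$ streams by choosing which of the $x_i$ time slots of the first window receive $\pi_{\epsilon_i}(x_i)$ copies (padding the last slot and repeating the prefix), and shows that any two distinct choices lead, at some later time, to window counts $x_i$ versus $x_i-\pi_{\epsilon_i}(x_i)$, whose $G$-values differ by more than the permitted relative error $1/\log^k$; hence the memory is at least $\log\binom{x_i}{\lfloor x_i/\pi_{\epsilon_i}(x_i)\rfloor} \geq \tfrac{1}{2}\log^{r}(x_i)\log\pi_{\epsilon_i}(x_i)$ bits (with the case $\pi_{\epsilon_i}(x_i)=1$ handled by doubling the insertions), and $r$ arbitrary yields the super-polylogarithmic bound.

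Two smaller points. First, {\sf STREAM-POLYLOG} allows randomized algorithms with failure probability $0.3$, so a deterministic encoding/counting argument alone does not close the ``zero'' direction; the paper lifts the deterministic bound to randomized algorithms via Yao's minimax principle \cite{MR95}, as in \cite{DGIM02}, and your sketch should do the same. Second, in the upper bound make sure the quantifier order of Definition~\ref{defn:tractability} is respected: for each fixed $k$ (governing $\epsilon \geq 1/\log^k(nN)$) you may use the corresponding $N_0,t,r,N_1$, and the space exponent is allowed to depend on them --- your closing remark handles this, but it is where the statement ``$\forall k\,\exists t$'' of Definition~\ref{def: strplg} is actually earned.
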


\section{Lower Bound for Sliding Windows}
\label{sec:lb}

In this section, we give a space lower bound for any non-tractable function $G$.
We first show a deterministic lower bound for any algorithm that
approximates the $G$-Sum problem.  Our technique is inspired by the lower bound proof in~\cite{DGIM02} for
estimating the number of $1$'s for sliding windows.

\begin{thm}
Let $G$ be a function such that $G \not\in \mathsf{Tractable}$.  Then, any deterministic algorithm that
solves the $G$-Sum problem with relative error $\epsilon' = 1/\log^b(nN)$ (for some constant $b$) must
use space at least $\Omega(\log^a(nN))$, where $a$ is arbitrarily large.
\end{thm}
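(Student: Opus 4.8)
The plan is to run a fooling argument on the memory configurations of the deterministic algorithm, in the spirit of the exponential-histogram lower bound of~\cite{DGIM02}. First I would record the standard principle: if $\mathcal{A}$ is deterministic and uses $s$ bits, the state $\mu(P)$ reached after processing a prefix $P$ is a function of $P$ only, and for every continuation $S$ the outputs of $\mathcal{A}$ on $PS$ at all later times are determined by $\mu(P)$ and $S$. Hence, whenever two prefixes $P\ne P'$ admit a common continuation $S$ and a window for which the true values of $\sum_i G(m_i)$ differ by more than a $(1+\epsilon')^2$ factor, we must have $\mu(P)\ne\mu(P')$. So it suffices to exhibit $2^{\Omega(\log^a(nN))}$ prefixes that are pairwise separated in this sense, and all of them can be supported on a universe of only polylogarithmically many elements (one element per ``phase'' below, plus fresh singletons used for padding).

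Next I would split on why $G$ is not tractable. Taking $G(1)>0$ (the degenerate case $G(1)=0$ aside), failure of tractability means that either \eqref{eqn:tract1} or the smoothing predicate \eqref{eqn:tract2} fails. If \eqref{eqn:tract1} fails, the obstruction is already visible in the unbounded model --- make the window so large that nothing ever expires --- and one inherits a super-polylogarithmic-in-$n$, hence in $nN$, deterministic lower bound from the analysis underlying~\cite{BO10}; so the substantive case is the failure of \eqref{eqn:tract2}. Unpacking that failure gives a fixed $k$ such that for every $r$ there is an arbitrarily large integer $x$ with $\epsilon_x>1/\log^k(x)$ and $z:=\pi_{\epsilon_x}(x)<x/\log^r(x)$; from the definition of the local jump, after possibly replacing $x$ by $x-z$ I may assume the right-sided jump $G(x+z)>(1+\epsilon_x)G(x)$, while $G$ varies by at most a $(1\pm\epsilon_x)$ factor over $(x-z,x+z)$. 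The ratio $x/z>\log^r(x)$ is what ultimately gets exponentiated, and since $\epsilon_x$ is inverse-polylogarithmic in $nN$ one fixes the theorem's constant $b$ (a function of $k$) so that a lower bound against the more lenient error $\epsilon_x$ transfers to error $\epsilon'=1/\log^b(nN)$.

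For the main case I would build the hard family exactly as in~\cite{DGIM02}, scaled to $G$: take $M\approx\log^r(x)$, let the window size be $N=\Theta(Mx)$, and index prefixes by $s\in\{0,1\}^M$, where $P_s$ is a concatenation of $M$ ``phase blocks'' --- block $j$ consisting of $x$ copies of element $j$ together with $z$ additional copies when $s_j=1$ --- and the common continuation is a stream of fresh singletons engineered so that the phase blocks expire one at a time. At the instant a given phase block begins to expire, the multiplicity of its element is $x$ or $x+z$ according to the corresponding bit, the padding contributes a fixed $s$-independent amount to $\sum_i G(m_i)$, and --- by the geometric sizing of the blocks and the expiry schedule --- only that phase's bit is ``visible'' at the scale where $G$ jumps; separation across two prefixes then surfaces at the decisive window as a crossing of the $(1+\epsilon_x)$-threshold $G(x)\to G(x+z)$. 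This yields $2^M=2^{\Omega(\log^r(nN))}$ pairwise-separated prefixes, hence $\Omega(\log^r(nN))$ bits of space, and since $r$ is an arbitrary constant the exponent $a$ is arbitrarily large, as claimed.

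The hard part will be controlling the interference between the $s$-independent ``bulk'' of the summation and the single bit of ``signal'' carried by the phase currently expiring: if several still-active phase blocks carry the enlarged multiplicity $x+z$, their contributions can cancel or swamp the $(1+\epsilon_x)$-factor we want to detect --- this is precisely the effect that makes the $\Theta(\tfrac1\epsilon\log^2 N)$ bound of~\cite{DGIM02} tight for plain counting and cannot be improved there. I expect to need (i) a careful choice of block sizes and of the order in which the phases are laid into the stream, so that at each decisive window exactly one bit sits at the controlled scale while the remaining active blocks are negligible there, and (ii) a separate treatment of the possibility that the jump $G(x+z)/G(x)$ is itself very large (it may be concentrated entirely in the step $x+z-1\to x+z$): in that regime a single enlarged block dominates the summation and the quantity exposed to the algorithm is effectively a suffix sum of $s$, which a $(1\pm\epsilon_x)$-approximation still resolves bit by bit. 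Verifying that these regimes exhaust the ways \eqref{eqn:tract2} can fail, and that in each one the $M\approx\log^r(x)$ bits are genuinely forced into $\mu(P)$, is the technical core of the argument.
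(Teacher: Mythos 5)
Your high-level frame matches the paper's: a deterministic fooling argument in the spirit of~\cite{DGIM02}, a case split in which failure of Predicate~(\ref{eqn:tract1}) is delegated to the unbounded-model lower bound of~\cite{BO10}, and, for failure of Predicate~(\ref{eqn:tract2}), the extraction of a fixed $k$ and, for every $r$, an arbitrarily large $x$ with $\epsilon_x>1/\log^k(x)$ and $z=\pi_{\epsilon_x}(x)<x/\log^r(x)$. The genuine gap is in the hard construction, which you explicitly leave unresolved and which I do not believe can be repaired in the form you propose. With one element per phase block and $M\approx\log^r(x)$ blocks simultaneously alive, the decisive window's value of $\sum_i G(m_i)$ carries a bulk of order $M\cdot G(x)$ from the other active blocks, while flipping one bit moves the sum by only about $\epsilon_x G(x)$; the two streams are therefore separated only at relative error roughly $\epsilon_x/M\approx 1/\log^{k+r}(nN)$, so the constant $b$ in the theorem would have to grow with $r$ (equivalently with $a$), inverting the required quantification (fixed $b$, arbitrarily large $a$). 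Your two proposed remedies do not close this: ``geometric sizing of the blocks'' would require controlling $G$ at scales other than $x$, but the negation of~(\ref{eqn:tract2}) gives information about $G$ only on the interval $[x-z,x+z]$ and $G$ is otherwise an arbitrary non-decreasing function (it may be essentially flat or explosively growing elsewhere), so the other blocks cannot be made ``negligible at the controlled scale''; and padding with fresh singletons makes either the number of in-window singletons or a $z\cdot G(1)$ term depend on the bits, and $G(1)$ is uncontrolled relative to the signal $\epsilon_x G(x)$, so it can mask the $(1\pm\epsilon_x)$ gap. You correctly flag this interference as ``the technical core,'' but it is exactly the part that is missing.

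The paper avoids the problem by collapsing everything onto a universe of size $n=1$ with window $N=x$: all the bits are encoded in the \emph{timing} of bursts of exactly $\pi_{\epsilon_x}(x)$ copies of the single element (choose $\lfloor x/\pi_{\epsilon_x}(x)\rfloor$ of the first $x$ time steps, pad the first window to exactly $x$ items, insert nothing once, then replay the first $x-2$ steps). For two streams first differing at time $d$, the window at time $x+d$ contains exactly $x$ items for one and $x-\pi_{\epsilon_x}(x)$ for the other --- later differences only permute which time steps carry bursts, not how many bursts lie in the window --- so the sums are $G(x)$ versus $G(x-\pi_{\epsilon_x}(x))<(1-\epsilon_x)G(x)$, a clean $\epsilon_x$-gap with no bulk term at all, since the entire window aggregates into a single frequency. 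Combined with the standard common-continuation argument at the end of the first window, this fools any algorithm with error below $\epsilon_x\geq 1/\log^k(nN)$ (so $b$ depends only on $k$, not on $r$), and counting the streams gives $\log\binom{x}{\lfloor x/\pi_{\epsilon_x}(x)\rfloor}\geq\Omega(\log^r(x))$ bits (with the degenerate case $\pi_{\epsilon_x}(x)=1$ handled by doubling the burst). If you want to salvage your plan, the lesson is to pack all the bits into the arrival pattern of one element rather than one bit per element; as written, the interference you identified is fatal rather than merely technical.
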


\begin{proof}
We construct a set of input streams such that, for any pair of data streams in the set, the algorithm must distinguish
between these two inputs at some point as the window slides.  Therefore, the space of the algorithm must be at least
logarithmic in the size of this set.

Since $G \not\in \mathsf{Tractable}$, in Definition~\ref{defn:tractability}, either
Predicate (\ref{eqn:tract1}) or (\ref{eqn:tract2}) does not hold.  If Predicate (\ref{eqn:tract1})
does not hold, then the lower bound from~\cite{BO10} applies and we are done.
Hence, we assume that
Predicate (\ref{eqn:tract2}) does not hold, implying:
$\exists k, \forall r,N_1, \exists x \geq N_1, \epsilon : \epsilon > \frac{1}{log^k(x)} \wedge
\pi_\epsilon(x) < \frac{x}{\log^r(x)}.$
Let $k$ be given, and let $r$ be arbitrarily large.  This negation implies that there are infinitely many
increasing points $x_1,x_2,x_3,\ldots$ and
corresponding values $\epsilon_1,\epsilon_2,\epsilon_3, \ldots$, where $\epsilon_i > \frac{1}{\log^k(x_i)}$
and $\pi_{\epsilon_i}(x_i) < \frac{x_i}{\log^r(x_i)}$.

Surprisingly, we will construct our lower bound with a universe of size $n=1$, namely $U = \{1\}$.  For each $x_i$,
we will construct a set of streams with a fixed, upper bounded window size of $N=x_i$, and argue that the algorithm
must use memory at least $\log^r(x_i)$ (note that, as the $x_i$ are monotonically increasing, our lower bound will apply
for asymptotically large $N$).  We assume without loss of generality that
$G(x_i - \pi_{\epsilon_i}(x_i)) < (1-\epsilon_i)G(x_i)$.  Our constructed streams will be as follows.
For each $N = x_i$, note that our window
consists of elements which have arrived in the past $x_i$ time steps.  For the first $x_i$ time steps, we
construct many streams by choosing $\lfloor\frac{x_i}{\pi_{\epsilon_i}(x_i)}\rfloor$ of these time steps
(each choice defining a different stream).  For each chosen time step, we insert $\pi_{\epsilon_i}(x_i)$ $1$'s into
the stream, and for each time step that is not chosen, we insert zero elements.  For technical reasons, we pad
the last time step $x_i$ in the first window with
$x_i - \pi_{\epsilon_i}(x_i)\lfloor\frac{x_i}{\pi_{\epsilon_i}(x_i)}\rfloor$ $1$'s.  Note that the
number of elements in the first window at time $x_i$ is
$\pi_{\epsilon_i}(x_i)\lfloor\frac{x_i}{\pi_{\epsilon_i}(x_i)}\rfloor +
(x_i - \pi_{\epsilon_i}(x_i)\lfloor\frac{x_i}{\pi_{\epsilon_i}(x_i)}\rfloor) = x_i$.  We insert nothing at
time step $x_i + 1$.  For the remaining time steps $x_i + 2, \ldots, 2x_i - 1$, we simply repeat the first
$x_i - 2$ time steps of the stream (i.e., if time step $t$ was chosen in the first $x_i$ time steps,
$1 \leq t \leq x_i - 2$, then we insert $\pi_{\epsilon_i}(x_i)$ $1$'s at time step $x_i + t + 1$).

Now, we argue that for any such pair of constructed streams $A$, $B$ which are different, any algorithm
with relative error smaller than $\epsilon' = 1/\log^k(nN)$ must distinguish between these two inputs.
To see this, consider the earliest time $d$ when the two streams differ (note that $1 \leq d \leq x_i - 1$).
Let $W_A$ be the window for stream $A$ (and similarly $W_B$ for stream $B$).
Let $c$ be the number of chosen time steps in the first $d$ time steps of stream $A$.
Without loss of generality, we assume time step $d$ was chosen in stream $A$
but not stream $B$.  Hence, the number of chosen time steps in stream $B$ up to time
$d$ is $c-1$.  Consider the windows at time step $x_i + d$.  The number of elements in $W_A$ at this time
is given by $\pi_{\epsilon_i}(x_i)[\lfloor\frac{x_i}{\pi_{\epsilon_i}(x_i)}\rfloor - c + (c-1)] + x_i -
\pi_{\epsilon_i}(x_i)\lfloor\frac{x_i}{\pi_{\epsilon_i}(x_i)}\rfloor = x_i - \pi_{\epsilon_i}(x_i)$.  Moreover,
the number of elements in $W_B$ is given by
$\pi_{\epsilon_i}(x_i)[\lfloor\frac{x_i}{\pi_{\epsilon_i}(x_i)}\rfloor - (c-1) + (c-1)] + x_i -
\pi_{\epsilon_i}(x_i)\lfloor\frac{x_i}{\pi_{\epsilon_i}(x_i)}\rfloor = x_i$.
Hence, the $G$-Sum value at time $x_i + d$ for $W_A$ is $G(x - \pi_{\epsilon_i}(x_i)) < (1-\epsilon_i)G(x_i)$.
As long as the algorithm has relative error $\epsilon ' = 1/\log^k(nN) < \epsilon_i$, streams $A$ and
$B$ must be distinguished.

Thus, the algorithm's memory is lower bounded by the logarithm of the number of constructed streams, of which
there are $\binom{x_i}{\lfloor\frac{x_i}{\pi_{\epsilon_i}(x_i)}\rfloor}$ for each $x_i$.  We have
$\log(\binom{x_i}{\lfloor\frac{x_i}{\pi_{\epsilon_i}(x_i)}\rfloor}) \geq
\lfloor\frac{x_i}{\pi_{\epsilon_i}(x_i)}\rfloor \log (\pi_{\epsilon_i}(x_i)) \geq
\frac{\log^r(x_i)}{2}\log(\pi_{\epsilon_i}(x_i)).$
If $\pi_{\epsilon_i}(x_i) = 1$, we repeat the proof inserting two $1$'s at each time step and the proof goes through.
Observing that $r$ can be made arbitrarily large gives the proof.
\end{proof}

We now have a randomized lower bound by appealing to Yao's minimax principle~\cite{MR95} and building on top of our
deterministic lower bound, similarly to~\cite{DGIM02}.
\begin{thm}\label{thm:rlb}
Let $G$ be a function where $G \not\in \mathsf{Tractable}$.  Then, any randomized algorithm that solves
$G$-Sum with relative error smaller than $\epsilon' = 1/\log^b(nN)$ for some constant $b$ and
succeeds with at least constant probability $1 - \delta$ must use memory $\Omega(\log^a(nN))$, where $a$
is arbitrarily large.
\end{thm}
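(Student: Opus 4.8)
The plan is to derive the randomized lower bound from the deterministic one via Yao's minimax principle, following the template of~\cite{DGIM02}. First I would fix the constant $k$ coming from the negation of Predicate~(\ref{eqn:tract2}), and fix an arbitrarily large exponent $r$; this gives us, for each index $i$, the point $x_i$, the value $\epsilon_i > 1/\log^k(x_i)$, and the bound $\pi_{\epsilon_i}(x_i) < x_i/\log^r(x_i)$, exactly as in the deterministic theorem. For each such $x_i$ we set $N = x_i$ (and $n = 1$) and consider the family of $\binom{x_i}{\lfloor x_i / \pi_{\epsilon_i}(x_i)\rfloor}$ streams constructed in the deterministic proof.

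Next I would define the hard input distribution $\mu_i$ to be the uniform distribution over this family of streams. The deterministic argument already shows that any two distinct streams $A, B$ in the family force a distinguishing event at some time $x_i + d$: the $G$-Sum value for one window is $G(x_i)$ and for the other is $G(x_i - \pi_{\epsilon_i}(x_i)) < (1-\epsilon_i)G(x_i)$, so a $(1\pm\epsilon')$-approximation with $\epsilon' < \epsilon_i$ cannot be valid for both. Consequently, any deterministic algorithm using space $s$ can be correct (on all windows, for all times) on at most $2^s$ of the streams in the family, since the algorithm's memory contents at the critical time are what must differ. Therefore a deterministic algorithm that is correct on a $(1-\delta)$ fraction of the inputs under $\mu_i$ must satisfy $2^s \geq (1-\delta)\binom{x_i}{\lfloor x_i/\pi_{\epsilon_i}(x_i)\rfloor}$, which, using the estimate $\log\binom{x_i}{\lfloor x_i/\pi_{\epsilon_i}(x_i)\rfloor} \geq \frac{1}{2}\log^r(x_i)\log(\pi_{\epsilon_i}(x_i))$ from the deterministic proof (and the degenerate-case fix when $\pi_{\epsilon_i}(x_i) = 1$), gives $s = \Omega(\log^r(x_i))$. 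By Yao's minimax principle, any randomized algorithm succeeding with probability at least $1-\delta$ on the worst-case input must use at least this much space on some input; since $nN = \Theta(x_i)$ here, this is $\Omega(\log^r(nN))$, and as $r$ is arbitrary we get $\Omega(\log^a(nN))$ for arbitrarily large $a$.

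The step I expect to require the most care is the counting argument underlying the application of Yao's principle: namely, that a deterministic low-space algorithm can be simultaneously correct (at \emph{every} time step, across the whole evolution of the window) on only $2^s$ distinct streams. The subtlety is that ``correct on stream $A$'' is a statement about all windows, not just the critical window at time $x_i + d$, and one must argue that it is precisely the algorithm's state (of which there are at most $2^s$) that must take distinct values across streams that get distinguished — a stream $A$ and $B$ that the algorithm confuses at their first difference will produce identical outputs thereafter, hence the algorithm fails on at least one of them. Making this pigeonhole clean, and handling the boundary cases ($\pi_{\epsilon_i}(x_i) = 1$, the padding of the last time step, and the precise relationship between $\epsilon'$, $b$, $k$, and the $\epsilon_i$), is where the details live; everything else is a direct transcription of the deterministic argument plus the standard Yao reduction.
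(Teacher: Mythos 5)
Your proposal is correct and follows essentially the same route as the paper, which proves this theorem only by invoking Yao's minimax principle on top of the deterministic construction (as in the reference to Datar--Gionis--Indyk--Motwani); your distributional counting argument is the natural fleshing-out of that one-line reduction. One small refinement to the step you flagged: the pigeonhole is cleanest if pinned to the algorithm's state at a \emph{fixed} time (the end of the first window, time $x_i+1$) rather than at the pair-dependent first difference $d$ --- two family streams agreeing in state there receive identical inputs up through time $x_i+d$ (since the repeated portion up to that point is their common prefix), so their outputs agree at the critical time while the true answers differ, and hence at most one stream per state can be answered correctly.
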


\section{An Algorithm for Tractable Functions}\label{sec:tract}
In this section, we complete the proof of Theorem~\ref{thm:zeroone} by first approximating heavy elements:
\begin{prob}[$G$-Core]\label{prob:heavy}
We have a stream $D(n,N)$ and parameters~$\epsilon,\delta > 0$.  For each window $W$, with probability at least $1-\delta$,
maintain a set $S = \{g_1',\ldots,g_\ell'\}$ such that
$\ell = O^*(1)$ and there exists a set of indices $\{j_1,\ldots,j_{\ell}\}$ where $(1-\epsilon)G(m_{j_i}) \leq g_i' \leq (1+\epsilon)G(m_{j_i})$
for each $1 \leq i \leq \ell$.  If there is a $(G,1)$-heavy element $m_i$ with respect to $M(W)$, then $i \in \{j_1,\ldots,j_{\ell}\}$.
\end{prob}

We begin solving the above problem with the following lemma (taken from~\cite{BO10}).

\begin{lem}\label{lem:majorseparation}
Let $V = (v_1,\ldots,v_n)$ be a vector with non-negative entries of dimension $n$ and $H$ be a pairwise independent random vector
of dimension $n$ with entries $h_i \in \{0,1\}$ such that $P(h_i = 1) = P(h_i = 0) = \frac{1}{2}$.   Denote
by $H'$ the vector with entries $1 - h_i$.  Let $K > 10^4$ be a constant, and let $X = \langle V,H \rangle$
and $Y = \langle V,H' \rangle$.
If there is an $(F_1,K)$-heavy element $v_i$ with respect to $V$, then:
$P((X > KY) \vee (Y > KX)) = 1.$
If there is no $(F_1,\frac{K}{10^4})$-heavy element with respect to $V$, then:
$P((X > KY) \vee (Y > KX)) \leq \frac{1}{2}.$
\end{lem}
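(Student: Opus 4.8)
The plan is to split into the two cases directly. For the first case, suppose $v_i$ is $(F_1,K)$-heavy, meaning $v_i > K\sum_{j\neq i}v_j$. The coordinate $i$ lands entirely in exactly one of $X$ or $Y$ (depending on whether $h_i=1$ or $h_i=0$), and since all entries are non-negative, whichever sum receives $v_i$ is at least $v_i$, while the other sum is at most $\sum_{j\neq i}v_j < v_i/K$. Thus that sum exceeds $K$ times the other one with certainty, giving $P((X>KY)\vee(Y>KX))=1$. This case needs essentially no probabilistic argument — just the deterministic heaviness inequality and non-negativity.

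For the second case, suppose there is no $(F_1,K/10^4)$-heavy element, i.e., $v_j \le \frac{10^4}{K}\sum_{\ell\neq j}v_\ell$ for every $j$, equivalently $v_j \le \frac{10^4}{K+10^4}|V| \le \frac{10^4}{K}|V|$ for all $j$. I would bound the probability of the ``bad'' separation event by a second-moment / Chebyshev argument on $X$ (noting $Y = |V| - X$, so $X$ and $Y$ are symmetric and $X+Y=|V|$ is fixed). Since $P(h_i=1)=1/2$, we have $\mathbb{E}[X] = |V|/2$, and because $H$ is pairwise independent, $\mathrm{Var}(X) = \frac14\sum_j v_j^2 \le \frac14 (\max_j v_j)|V| \le \frac14 \cdot \frac{10^4}{K}|V|^2$. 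The event $X>KY$ or $Y>KX$ forces $X$ (or $Y$) to be at most $\frac{1}{K+1}|V|$, hence $|X-\mathbb{E}[X]| \ge (\frac12-\frac{1}{K+1})|V| \ge \frac14|V|$ for $K$ large. Chebyshev then gives probability at most $\mathrm{Var}(X)/(|V|/4)^2 \le 4\cdot\frac{10^4}{K}$, and since $K>10^4$... I would need to be slightly more careful with the constants here: the clean statement wants the bound $\tfrac12$, so I would track the constant $10^4$ against $K$ precisely, using that the ``no heavy element'' hypothesis is with threshold $K/10^4$ (not $K$) exactly to make the variance small enough that Chebyshev yields $1/2$. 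One subtlety is that the worst case for the variance is a single large coordinate, which is exactly the near-heavy situation the hypothesis rules out, so the bound on $\max_j v_j$ is what drives everything.

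The main obstacle I anticipate is not conceptual but bookkeeping: getting the constant in the Chebyshev bound down to exactly $1/2$. A naive application gives something like $O(10^4/K)$ which is $O(1)$ but not obviously below $1/2$ for all $K>10^4$. The fix is to note that the deviation needed is actually close to $\frac12|V|$ (not a small deviation), so $(|V|/2 - |V|/(K+1))^2$ is close to $|V|^2/4$, and combined with $\mathrm{Var}(X)\le \frac14\cdot\frac{10^4}{K}|V|^2 < \frac{1}{4}|V|^2$ when $K>10^4$, the ratio is strictly below $1$; to push it to $1/2$ one uses that $K$ is large enough (e.g., the hypothesis $K>10^4$ gives $\mathrm{Var}(X) < \frac14 |V|^2$, and with a bit more room from $\frac12 - \frac{1}{K+1} \ge \frac14$ we actually get a factor $4$ improvement, yielding $\le \frac{10^4}{K} \le$ something manageable). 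If the stated constants don't quite close, the honest move is to invoke that this lemma is quoted verbatim from~\cite{BO10} and cite their proof, but I would first attempt the self-contained Chebyshev estimate as above.
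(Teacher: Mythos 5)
The paper itself does not prove this lemma -- it is quoted from~\cite{BO10} -- so the only question is whether your self-contained argument is correct. Your first case is fine: non-negativity plus $v_i > K\sum_{j\neq i}v_j$ immediately forces the side receiving $v_i$ to exceed $K$ times the other side.

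The second case has a genuine gap, and it comes from inverting the heaviness parameter. By the paper's Definition of a heavy element, ``no $(F_1,K/10^4)$-heavy element'' means $v_j \leq \frac{K}{10^4}\sum_{\ell\neq j}v_\ell$ for every $j$ -- and since $K>10^4$, the ratio $K/10^4$ is \emph{bigger} than $1$, so this is a weak hypothesis that still allows a single coordinate to carry up to a $\frac{K}{K+10^4}$ fraction of $|V|$. You instead read it as $v_j \leq \frac{10^4}{K}\sum_{\ell\neq j}v_\ell$, i.e.\ every coordinate is a tiny fraction of $|V|$, and your entire variance bound $\mathrm{Var}(X)\leq \frac14\frac{10^4}{K}|V|^2$ rests on that misreading. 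Under the correct hypothesis, $\max_j v_j$ can be on the order of $|V|$, so $\mathrm{Var}(X)$ can be $\Theta(|V|^2)$ and Chebyshev gives a bound $\geq 1$, i.e.\ nothing. That the weak reading is the intended one is confirmed by how the lemma is used in the proof of Lemma~\ref{lem:hybridmajor}: there the hypothesis is ``no $(F_2,2)$-heavy entry,'' i.e.\ $K/10^4=2$, which is meaningless under your reading. A concrete witness that your route cannot be patched: take $n=2$, $v_1=v_2$ (no coordinate is even $2$-heavy); with independent bits, with probability $1/2$ both coordinates land on the same side, $\min(X,Y)=0$, and the separation event occurs -- so the bound $1/2$ is tight, and the event is driven by where the single dominant coordinate and its complement land, not by a small deviation of $X$ from $|V|/2$. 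A correct proof must therefore treat the ``one large but not $K$-heavy coordinate'' regime separately (e.g.\ condition on the side receiving the largest coordinate and bound the probability that the \emph{opposite} side gets almost none of the remaining mass, using only its expectation and boundedness, since with pairwise independence you cannot even control the conditional variance), with plain Chebyshev reserved for the regime where all coordinates are genuinely small. Your secondary worry about squeezing the constant to exactly $1/2$ is real but minor by comparison; the fallback of citing~\cite{BO10} is in fact what the paper does.
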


We now prove some lemmas related to how approximating values can affect the function $G$.

\begin{lem}\label{lem:Gapprox}
Let $0 < \epsilon \leq \frac{1}{2}$, and let $x,u,v,y \geq 0$ satisfy $|x - u| \leq 0.1\pi_\epsilon(x)$ and
$v,y < 0.1 \pi_\epsilon(x)$, where $\pi_\epsilon(x) > 1$.
Then $(1 - 4\epsilon)G(u+v+y) \leq G(u) \leq (1+4\epsilon)G(u-v-y)$.
\end{lem}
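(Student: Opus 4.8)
The plan is to exploit monotonicity of $G$ together with the defining property of the local jump $\pi_\epsilon(x)$. First I would observe that all of $u$, $u+v+y$, and $u-v-y$ lie within a window of radius less than $\pi_\epsilon(x)$ around $x$: indeed $|u - x| \le 0.1\pi_\epsilon(x)$ and $0 \le v+y < 0.2\pi_\epsilon(x)$ give $|u+v+y - x| < 0.3\pi_\epsilon(x) < \pi_\epsilon(x)$ and $|u-v-y - x| < 0.3\pi_\epsilon(x) < \pi_\epsilon(x)$ (here I use $v,y \ge 0$; if they could be negative the same bound holds with absolute values). By the definition of $\pi_\epsilon(x)$, for every integer $z$ with $0 \le z < \pi_\epsilon(x)$ we have $G(x+z) \le (1+\epsilon)G(x)$ and $G(x-z) \ge (1-\epsilon)G(x)$; extending to all reals in $[x - \pi_\epsilon(x), x]$ and $[x, x+\pi_\epsilon(x)]$ by monotonicity, every point $w$ with $|w - x| < \pi_\epsilon(x)$ satisfies $(1-\epsilon)G(x) \le G(w) \le (1+\epsilon)G(x)$. (One has to be a little careful that $z$ ranges over integers in the definition, but since $u,v,y$ need not be integers I would round outward to the nearest integer still strictly below $\pi_\epsilon(x)$, which is available because $\pi_\epsilon(x) > 1$; this is exactly why that hypothesis is there.)

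Applying this to $w = u$, $w = u+v+y$, and $w = u-v-y$ (all legitimate since all three are within distance $< \pi_\epsilon(x)$ of $x$), I get
\[
G(u+v+y) \le (1+\epsilon)G(x), \qquad G(u) \ge (1-\epsilon)G(x), \qquad G(u-v-y) \ge (1-\epsilon)G(x), \qquad G(u) \le (1+\epsilon)G(x).
\]
From the first two, $G(u) \ge (1-\epsilon)G(x) \ge \frac{1-\epsilon}{1+\epsilon}G(u+v+y) \ge (1-2\epsilon)G(u+v+y)$ for $\epsilon \le \tfrac12$, which already gives the left inequality with room to spare (the statement asks only for $1-4\epsilon$). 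From the last two, $G(u) \le (1+\epsilon)G(x) \le \frac{1+\epsilon}{1-\epsilon}G(u-v-y) \le (1+4\epsilon)G(u-v-y)$ for $\epsilon \le \tfrac12$, since $\frac{1+\epsilon}{1-\epsilon} \le 1 + 4\epsilon$ there. That completes both halves.

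The only subtlety — and the place I would be most careful — is the discreteness of the local jump: $\pi_\epsilon$ is defined via a minimum over $z \in \mathbb{N}$, so the clean statement "every real $w$ with $|w-x| < \pi_\epsilon(x)$ has $G(w)$ within $(1\pm\epsilon)G(x)$" needs the intermediate step of first establishing it for integer offsets strictly below $\pi_\epsilon(x)$ and then filling in by monotonicity of $G$, using $\pi_\epsilon(x) > 1$ so that at least the offset $z=0$ and the real interval up to (but not including) $\pi_\epsilon(x)$ is genuinely nonempty and usable. Everything else is a one-line manipulation of $\frac{1\pm\epsilon}{1\mp\epsilon}$ against $1 \pm 4\epsilon$ for $\epsilon \le \tfrac12$, and the constant $4$ is deliberately loose so no tight estimate is required.
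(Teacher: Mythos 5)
Your proof is correct and follows essentially the same route as the paper's: all of $u$, $u+v+y$, $u-v-y$ lie within $\pi_\epsilon(x)-1$ of $x$ (the paper bounds $0.3\pi_\epsilon(x)\le\pi_\epsilon(x)-1$, which is exactly your outward-rounding step), so each $G$-value is within $(1\pm\epsilon)G(x)$, and the $\frac{1\pm\epsilon}{1\mp\epsilon}\le 1\pm 4\epsilon$ arithmetic for $\epsilon\le\frac{1}{2}$ finishes. The only caveat is that your blanket claim that every real $w$ with $|w-x|<\pi_\epsilon(x)$ satisfies the $(1\pm\epsilon)$ bound is slightly too strong (it can fail in the fractional gap just below $x+\pi_\epsilon(x)$), but the points you actually use are within $0.3\pi_\epsilon(x)$ of $x$, where rounding outward to an integer strictly below $\pi_\epsilon(x)$ is indeed available, so the argument stands.
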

\begin{proof}
First, we note that $u+v+y \leq x + 0.1 \pi_\epsilon(x) + v + y \leq
x + 0.3 \pi_\epsilon(x) \leq x + \pi_\epsilon(x) - 1$ (recalling $\pi_\epsilon(x) > 1$).  We can
similarly get that $u-v-y \geq x - (\pi_\epsilon(x) - 1)$.
Hence, we get that $(1-\epsilon)G(x) \leq G(x - (\pi_\epsilon(x) - 1)) \leq G(u-v-y) \leq G(u) \leq G(u+v+y)
\leq G(x + (\pi_\epsilon(x) - 1)) \leq (1+\epsilon)G(x)$.

We conclude the proof by noting: $(1+4\epsilon)G(u-v-y) \geq (1+4\epsilon)(1-\epsilon)G(x)
\geq \frac{(1+4\epsilon)(1-\epsilon)}{1+\epsilon}G(u) \geq G(u)$.  Similarly, we get
$(1-4\epsilon)G(u+v+y) \leq (1-4\epsilon)(1+\epsilon)G(x) \leq \frac{(1-4\epsilon)(1+\epsilon)}{1-\epsilon}G(u)
\leq G(u)$.
\end{proof}

\begin{lem}\label{lem:Gapprox2}
Let $x,u,v,y \geq 0$ be such that $|x - u| \leq v + y$, and let $0 < \epsilon < 1$.  If
$(1 - \epsilon)G(u+v+y) \leq G(u) \leq (1+\epsilon)G(u-v-y)$, then
$(1 - \epsilon)G(x) \leq G(u) \leq (1+\epsilon)G(x).$
\end{lem}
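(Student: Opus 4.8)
The plan is to use only two ingredients: the monotonicity of $G$ and the hypothesis $|x - u| \le v + y$. First I would rewrite the hypothesis $|x-u| \le v+y$ as the two-sided bound $u - (v+y) \le x \le u + (v+y)$, i.e., $x \in [u-v-y,\, u+v+y]$ (this is where nonnegativity of $v,y$ is implicitly harmless, since it only widens the interval). Applying the non-decreasing function $G$ to this chain of inequalities gives $G(u-v-y) \le G(x) \le G(u+v+y)$.

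Next I would combine these monotonicity bounds with the assumed sandwich $(1-\epsilon)G(u+v+y) \le G(u) \le (1+\epsilon)G(u-v-y)$. For the upper bound: $G(u) \le (1+\epsilon)G(u-v-y) \le (1+\epsilon)G(x)$, using $G(u-v-y) \le G(x)$ and $1+\epsilon > 0$. For the lower bound: since $G(x) \le G(u+v+y)$ and $1-\epsilon > 0$, we get $(1-\epsilon)G(x) \le (1-\epsilon)G(u+v+y) \le G(u)$. Chaining these two yields exactly $(1-\epsilon)G(x) \le G(u) \le (1+\epsilon)G(x)$, which is the claim.

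There is essentially no real obstacle here; the lemma is a direct monotonicity argument. The only point requiring a moment of care is bookkeeping of inequality directions and the observation that both scaling factors $1\pm\epsilon$ are positive (since $0 < \epsilon < 1$), so multiplying the monotonicity inequalities by them preserves their direction. No further assumptions on $G$ (strictness, $G(0)=0$, etc.) are needed.
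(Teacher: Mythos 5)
Your proof is correct and is essentially the same argument as the paper's: both simply note that $u-v-y \leq x \leq u+v+y$, apply monotonicity of $G$ to get $G(u-v-y) \leq G(x) \leq G(u+v+y)$, and chain this with the assumed sandwich to conclude. The paper just compresses the whole thing into one displayed chain of inequalities, while you spell out the bookkeeping; there is no substantive difference.
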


\begin{proof}
We have $(1-\epsilon)G(x) \leq (1-\epsilon)G(u+v+y) \leq G(u) \leq (1+\epsilon)G(u-v-y) \leq (1+\epsilon)G(x)$.
\end{proof}

We now give a useful subroutine over sliding windows which we will use in our main algorithm for
approximating heavy elements and prove its correctness
(there is a similar algorithm and proof in~\cite{BO10}, though it must be adapted to the sliding window setting).
\begin{algorithm}
\label{alg:res}
\DontPrintSemicolon
\For{$i = 1$ to $O(\log(nN))$} {
	\For{$j = 1$ to $C = O(1)$} {
		Generate a random hash function $H : [n] \rightarrow \{0,1\}$ with pairwise independent entries.\;
		Let $H' = 1 - H$ (i.e., $h_i' = 1 - h_i$, where $h_i$ is the $i^{th}$ entry of $H$). \;
		Consider the smooth histogram method for approximating $F_2$ on sliding windows~\cite{BO07}. \;
		Let $f_H$ be a $(1 \pm .1)$-approximation of $F_2$ on $D_H$ (with negligible error probability). \;
		Let $f_{H'}$ be a $(1 \pm .1)$-approximation of $F_2$ on $D_{H'}$ (with negligible error probability). \;
		Let $X_{ij} = 10\min\{f_H,f_{H'}\}$.
	}
	Let $Y_i = \frac{X_{i1} + \cdots + X_{iC}}{C}$ (i.e., $Y_i$ is the average of $C$ independent $X_{ij}$'s).\;
}
Output $r = \sqrt{median_i \{ Y_i \}}$ for the current window $W$. \;
\caption{Residual-Approximation($D$)}
\end{algorithm}
\begin{lem}\label{lem:resapprox}
Let $D(n,N)$ be any stream.  Algorithm Residual-Approximation makes a single pass over $D$ and uses
polylogarithmic memory in $n$ and $N$.  Moreover, if the current window $W$ contains an
$(F_2,2)$-heavy element $m_i$ with respect to $M(W)$, then the algorithm maintains and outputs
a value $r$ such that $2\sqrt{F_2^{res}(M(W))} < r < 3\sqrt{F_2^{res}(M(W))}$ (except
with negligible probability).
\end{lem}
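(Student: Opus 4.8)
The plan is to analyze the random variables $X_{ij}$ defined by the algorithm and show that, when an $(F_2,2)$-heavy element is present, their average concentrates around $F_2^{res}(M(W))$ up to a small constant factor; the median-of-averages trick then boosts the success probability so that the whole procedure succeeds except with negligible probability. The space bound is immediate: we run $O(\log(nN))$ rows of $C = O(1)$ smooth-histogram instances (from \cite{BO07}) for $F_2$ on the sampled substreams $D_H$ and $D_{H'}$, and each such instance uses polylogarithmic memory and one pass; since the hash functions $H$ are only pairwise independent they can be stored in $O(\log n)$ bits each, so the total memory is polylogarithmic in $n$ and $N$.

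For the correctness part, fix the window $W$ and the vector $V = M(W) = (m_1,\dots,m_n)$, and suppose $m_i$ is $(F_2,2)$-heavy. Apply Lemma \ref{lem:majorseparation} to the vector $f(V)$ where $f(z) = z^2$ (so the ``$F_1$-heavy'' condition there becomes exactly $F_2$-heaviness here): with $X = \langle f(V),H\rangle$ and $Y = \langle f(V),H'\rangle$, the heavy coordinate $m_i^2$ lands entirely in one of $D_H$, $D_{H'}$, so $\min\{X,Y\}$ equals the contribution of the non-heavy coordinates that happen to fall on the lighter side. The key computation is that, over the pairwise-independent choice of $H$, $\mathbb{E}[\text{(that lighter-side residual mass)}] = \tfrac12 \sum_{j\neq i} m_j^2 = \tfrac12 F_2^{res}(V)$ and its variance is small (pairwise independence suffices to bound $\mathrm{Var}$ by something like $\sum_{j\neq i} m_j^4 \le (F_2^{res})^2$, or better using that no non-heavy element dominates — here one invokes the ``no $(F_2, K/10^4)$-heavy element in the residual'' side of the analysis in the style of \cite{BO10}). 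Hence $10\min\{X,Y\}$ has expectation $5F_2^{res}(V)$; taking $f_H, f_{H'}$ to be $(1\pm 0.1)$-approximations only perturbs $X_{ij}$ by a further $1\pm 0.1$ factor, and averaging $C = O(1)$ independent copies drives the variance down enough (Chebyshev) that a single $Y_i$ lies in, say, $(4F_2^{res}(V),\,9F_2^{res}(V))$ with probability bounded above $1/2$ — comfortably, after choosing the constants. Taking the median over $i = 1,\dots,O(\log(nN))$ rows amplifies this to failure probability $O^*(1/(nN))$ by a Chernoff bound over the rows. Finally $r = \sqrt{\mathrm{median}_i Y_i}$ satisfies $2\sqrt{F_2^{res}(V)} < r < 3\sqrt{F_2^{res}(V)}$, since $\sqrt{4} = 2$ and $\sqrt{9} = 3$; the constant $10$ in $X_{ij}$ and the averaging count $C$ are chosen precisely to make these bounds come out to $2$ and $3$.

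I would be careful about two interface points. First, the smooth histogram machinery of \cite{BO07} gives an $F_2$-approximation \emph{on a sliding window}, but we are feeding it the substreams $D_H$ and $D_{H'}$ — we need that sampling commutes with the window (an element is active in $D_H$'s window iff it is active in $D$'s window and $H$ maps it to $1$), which holds because $H$ depends only on the element identity, not on time, so this is a routine observation. Second, one must handle the event that $X$ or $Y$ is $0$ (all non-heavy mass, or the heavy element itself, falls on one side): the heavy element always contributes at least $m_i^2 \ge F_2^{res}$ to its side, so $\min$ is always the residual side's mass and is $0$ only if \emph{all} non-heavy coordinates landed together, a low-probability event already absorbed into the variance/median analysis. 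The degenerate case $\pi$-type issues do not arise here; the only genuinely small-value edge case is if $F_2^{res}(V) = 0$ (a single element), which is excluded in spirit by heaviness making sense only with $n \ge 2$ residual mass, and in any event the statement is about the case where a residual second moment exists.

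The main obstacle I expect is pinning down the variance bound for $\min\{X,Y\}$ under only pairwise independence tightly enough that $C = O(1)$ averages suffice to separate $4F_2^{res}$ from $9F_2^{res}$ — this is where the ``no heavy element among the residuals'' guarantee from Lemma \ref{lem:majorseparation} (equivalently, that $m_i$ being $(F_2,2)$-heavy forces the rest to be spread out) must be used carefully, exactly as in the analogous argument of \cite{BO10}; the adaptation to sliding windows is otherwise mechanical.
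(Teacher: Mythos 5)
Your proposal is correct and follows essentially the same route as the paper's proof: argue that $\min\{f_H,f_{H'}\}$ is a $(1\pm 0.1)$-approximation of the residual $F_2$-mass landing on the side opposite the heavy element (expectation $\tfrac{1}{2}F_2^{res}(M(W))$ under pairwise independence), then apply Chebyshev to an average of $C=O(1)$ copies, boost with a median over $O(\log(nN))$ rows, and take the square root so the factor $10$ places $r$ strictly in $(2\sqrt{F_2^{res}(M(W))},3\sqrt{F_2^{res}(M(W))})$. One small repair: under mere pairwise independence the indicators $\mathbf{1}_{H(j)\neq H(i)}$ all share the coordinate $H(i)$, so their cross-covariances need not vanish and the bound $\mathrm{Var}\approx\sum_{j\neq i}m_j^4$ is not justified as written; the paper instead uses the trivial pointwise bound $0\le Z\le 10F_2^{res}(M(W))$, giving $\mathrm{Var}(Z)\le E(Z^2)\le 100\bigl(F_2^{res}(M(W))\bigr)^2$, which plugs into your Chebyshev step verbatim (and your appeal to Lemma~\ref{lem:majorseparation} is likewise unnecessary for this lemma).
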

\begin{proof}
Assume the current window $W$ has an $(F_2,2)$-heavy element $m_k$ with respect to $M(W)$.  Due to the properties of smooth
histograms from~\cite{BO07}, we know that $.9F_2(M(W_H)) \leq f_H \leq 1.1F_2(M(W_H))$, where $M(W_H)$ is the multiplicity vector
of the current window in substream $D_H$ (similarly for $f_{H'}$).  Hence, the random variable $X_{ij} = 10\min\{f_H,f_{H'}\}$
is a $(1 \pm .1)$-approximation of the random variable $Z = 10\sum_{\ell} {\bf 1}_{H(\ell) \neq H(k)} m_\ell^2$ (here, ${\bf 1}_{H(\ell) \neq H(k)}$
is the indicator random variable which is 1 if $H(\ell) \neq H(k)$ and 0 otherwise).  To see why, suppose that element $k$ is mapped to $1$ by $H$,
so that $k$ belongs to the substream $D_H$.  Then observe that
$$f_H \geq .9F_2(M(W_H)) \geq .9m_k^2 > 1.8\sum_{\ell \neq k}m_\ell^2 \geq 1.1\sum_{\ell}{\bf 1}_{H(\ell) \neq H(k)}m_\ell^2 \geq f_{H'}.$$
Hence, the minimum of $f_H$ and $f_{H'}$ will indeed be a $(1 \pm .1)$-approximation to $\sum_{\ell} {\bf 1}_{H(\ell) \neq H(k)} m_\ell^2$,
since this is the second moment of the vector $M(W_{H'})$ (the case is symmetric if element $k$ is mapped to $0$ by $H$).

Now, since $H$ is pairwise independent, we have that $E(Z) = 5F_2^{res}(M(W))$.  In particular, since we always have $0 \leq Z \leq 10F_2^{res}(M(W))$,
we can bound the variance by $Var(Z) \leq E(Z^2) \leq 100 (F_2^{res}(M(W)))^2$.  If we denote by $A$ the random variable which is the average
of $C$ independent $Z$'s, then we have $Var(A) = \frac{1}{C}Var(Z) \leq \frac{100}{C}(F_2^{res}(M(W)))^2$.  Hence, if we choose $C$ to be sufficiently
large, then by Chebyshev's inequality we have:
$$ P(|A - 5F_2^{res}(M(W))| \geq 0.1F_2^{res}(M(W))) \leq \frac{100Var(A)}{(F_2^{res}(M(W)))^2} \leq \frac{10^4}{C} \leq 0.1$$
(for instance, $C = 10^5$ is sufficient).

Now, if we take the median $T$ of $O(\log (nN))$ independent $A$'s, then by Chernoff bound this would make the probability negligible.
That is, we have $4.9F_2^{res}(M(W)) \leq T \leq 5.1F_2^{res}(M(W))$ except with negligible probability.  We can
repeat these arguments and consider the median of $O(\log(nN))$ averages (i.e., the $Y_i$'s) of $O(1)$ independent $X_{ij}$'s.  Since
there are only $O(\log(nN))$ $X_{ij}$'s total (with each one being a $(1 \pm .1)$-approximation to its corresponding random variable $Z$,
except with negligible probability), then by the union bound all of the $X_{ij}$'s will be $(1 \pm .1)$-approximations except with negligible probability
(since the sum of polylogarithmically many negligible probabilities is still negligible).  Therefore, the median of averages would give a
$(1 \pm .1)$-approximation to $T$.  Taking the square root guarantees that $2\sqrt{F_2^{res}(M(W))} < r < 3\sqrt{F_2^{res}(M(W))}$ (except with negligible
probability).

Note that the subroutine for computing an approximation to $F_2$ on sliding windows using smooth histograms can be done in one pass and
in polylogarithmic space (even if we demand a $(1 \pm .1)$-approximation and a negligible probability of failure).
\end{proof}

Now, we claim that Algorithm Compute-Hybrid-Major solves the following problem:
\begin{prob}[Hybrid-Major$(D,\epsilon)$]
Given a stream $D$ and $\epsilon > 0$, maintain a value $r \geq 0$ for each window $W$ such
that:
$1)$ If $r\neq 0$, then $r$ is a $(1\pm 4\epsilon)$-approximation of $G(m_j)$ for some $m_j$, and
$2)$ If the current window $W$ has an element $m_i$ such that
$\pi_{\epsilon}(m_i) \ge 20^5\sqrt{F^{res}_2(M(W))}$, then $r$ is a $(1\pm 4\epsilon)$-approximation of $G(m_i)$.
\end{prob}

\begin{algorithm}\label{alg:heavyapprox}
\DontPrintSemicolon
Let $a$ be a $(1 \pm \epsilon')$-approximation of $L_2$ for window $W$ using the smooth histogram
method~\cite{BO07} (with negligible probability of error), where $\epsilon' = \frac{1}{\log^{\Omega(1)}(N)}$. \;

Repeat $O(\log(nN))$ times, independently and in parallel: \;
\Indp
Generate a uniform pairwise independent vector $H \in \{0,1\}^n$. \;
Maintain and denote by $X'$ a $(1 \pm .2)$-approximation of the second moment for the window $W_H$ using
a smooth histogram~\cite{BO07} (with negligible probability of error). \;
Similarly define $Y'$ for the window $W_{\one-H}$. \;
If $X' < {(20)^4} Y'$ and $Y' < {(20)^4} X'$, then output $0$ and terminate the algorithm. \;
\Indm
In parallel, apply Residual-Approximation($D$) to maintain the residual second moment approximation,
let $b$ denote the output of the algorithm. \;
If $(1-4\epsilon)G(a + b + 2 \epsilon' a) > G(a)$ or $G(a) > (1+4\epsilon)G(a - b - 2 \epsilon' a)$, then output $0$. \;
Otherwise, output $G(a)$. \;
\caption{Compute-Hybrid-Major$(D, \epsilon)$}
\end{algorithm}

Before delving into the proof, we first give the following useful lemma.

\begin{lem}\label{lem:heavyapprox}
Suppose the current window $W$ contains an $(F_2,1)$-heavy element $m_i$.
Moreover, let $a$ be a $(1 \pm \epsilon')$-approximation of the $L_2$ norm of the current window $W$, where $\epsilon' < 1$.
Then $-\epsilon' m_i \leq a - m_i \leq (1 + \epsilon')\sqrt{F^{res}_2(M(W))} + \epsilon' m_i \leq 2\sqrt{F^{res}_2(M(W))} + \epsilon' m_i$.
\end{lem}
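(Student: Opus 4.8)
The plan is to unpack the definition of a $(1\pm\epsilon')$-approximation and combine it with the decomposition $F_2(M(W)) = m_i^2 + F_2^{res}(M(W))$, which is well-defined precisely because the hypothesis supplies an $(F_2,1)$-heavy element $m_i$. First I would record what $a$ being a $(1\pm\epsilon')$-approximation of the $L_2$ norm means: $(1-\epsilon')\sqrt{F_2(M(W))} \leq a \leq (1+\epsilon')\sqrt{F_2(M(W))}$.

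For the leftmost inequality $-\epsilon' m_i \leq a - m_i$, observe that $F_2(M(W)) = m_i^2 + F_2^{res}(M(W)) \geq m_i^2$, so $\sqrt{F_2(M(W))} \geq m_i$; hence $a \geq (1-\epsilon')\sqrt{F_2(M(W))} \geq (1-\epsilon')m_i$, which rearranges to $a - m_i \geq -\epsilon' m_i$. For the middle inequality, I would invoke subadditivity of the square root on nonnegative arguments: $\sqrt{F_2(M(W))} = \sqrt{m_i^2 + F_2^{res}(M(W))} \leq m_i + \sqrt{F_2^{res}(M(W))}$. Multiplying by $(1+\epsilon')$ and using the upper approximation bound gives $a \leq (1+\epsilon')\big(m_i + \sqrt{F_2^{res}(M(W))}\big) = m_i + \epsilon' m_i + (1+\epsilon')\sqrt{F_2^{res}(M(W))}$, i.e. $a - m_i \leq (1+\epsilon')\sqrt{F_2^{res}(M(W))} + \epsilon' m_i$. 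Finally, the rightmost inequality follows immediately from $\epsilon' < 1$, since then $1+\epsilon' < 2$ and $F_2^{res}(M(W)) \geq 0$.

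There is no serious obstacle here: the only step beyond bookkeeping is the subadditivity estimate $\sqrt{x+y}\leq\sqrt{x}+\sqrt{y}$, and the one point requiring minor care is that $F_2^{res}(M(W))$ is only defined under the heaviness hypothesis, so the decomposition $F_2(M(W)) = m_i^2 + F_2^{res}(M(W))$ must be justified by that hypothesis before it is used.
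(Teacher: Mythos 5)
Your proof is correct and follows essentially the same route as the paper's: bound $a$ above and below by $(1\pm\epsilon')\sqrt{F_2(M(W))}$, use $\sqrt{F_2(M(W))}\geq m_i$ for the lower bound, and use subadditivity $\sqrt{m_i^2+F_2^{res}(M(W))}\leq m_i+\sqrt{F_2^{res}(M(W))}$ for the upper bound, with $\epsilon'<1$ giving the final factor of $2$. Your remark that the heaviness hypothesis is what makes $F_2^{res}$ well-defined is a fine (if minor) addition the paper leaves implicit.
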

\begin{proof}
Since $a$ is a $(1 \pm \epsilon')$-approximation of the $L_2$ norm of $M(W)$, we have
$(1-\epsilon')\sqrt{\sum_{k=1}^nm_k^2} \leq a \leq (1+\epsilon')\sqrt{\sum_{k=1}^nm_k^2}$.
Hence, we have that
$$
a - m_i \leq (1+\epsilon')\sqrt{\sum_{k=1}^n m_k^2} - m_i \leq (1+\epsilon')m_i + (1+\epsilon')\sqrt{\sum_{j \neq i}m_j^2} - m_i \leq \epsilon' m_i + (1+\epsilon')\sqrt{F_2^{res}(M(W))}.
$$
Moreover, we have $m_i - a \leq m_i - (1-\epsilon')\sqrt{\sum_{k=1}^n m_k^2} \leq m_i - (1-\epsilon')m_i$,
which gives the lemma.
\end{proof}

\begin{lem}\label{lem:hybridmajor}
Compute-Hybrid-Major solves Hybrid-Major$(D, \epsilon)$ with negligible probability of error.
\end{lem}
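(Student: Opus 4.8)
The plan is to show that Algorithm Compute-Hybrid-Major meets both guarantees of Hybrid-Major$(D,\epsilon)$, conditioning throughout on the negligible-probability failure events of the various smooth-histogram subroutines and of Residual-Approximation. First I would dispose of property~$1$: if the algorithm outputs a nonzero value, it outputs $G(a)$, and it does so only when the final test $(1-4\epsilon)G(a+b+2\epsilon'a) \le G(a) \le (1+4\epsilon)G(a-b-2\epsilon'a)$ holds; I want to conclude $G(a)$ is a $(1\pm 4\epsilon)$-approximation of $G(m_j)$ for some actual frequency $m_j$. The natural candidate is the $(F_2,1)$-heavy element that must exist when the ``pairwise-independent split'' test does not fire: by Lemma~\ref{lem:majorseparation} (applied to the vector $M(W)$ with entries squared, so $F_1$ of that vector is $F_2(M(W))$, and with $K=(20)^4$), if there were no $(F_2,(20)^4/10^4)$-heavy element then one of the $O(\log(nN))$ independent trials would fire the terminate-and-output-$0$ step except with negligible probability; hence surviving to the final test means there is an $(F_2, (20)^4/10^4)$-heavy element $m_i$, in particular an $(F_2,2)$-heavy element. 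Now invoke Lemma~\ref{lem:heavyapprox}: $|a-m_i| \le 2\sqrt{F_2^{res}(M(W))}+\epsilon' m_i$, and by Lemma~\ref{lem:resapprox}, $b$ is a $(2,3)$-multiple of $\sqrt{F_2^{res}(M(W))}$, so $|a - m_i| \le b + 2\epsilon' a$ after absorbing the $\epsilon' m_i$ term into $2\epsilon' a$ (here I need $\epsilon' m_i \le 2\epsilon' a$, i.e. $m_i \le 2a$, which follows from $a \ge (1-\epsilon')m_i$). Then Lemma~\ref{lem:Gapprox2} applied with $x=m_i$, $u=a$, $v+y = b+2\epsilon'a$ turns the passed test directly into $(1-4\epsilon)G(m_i)\le G(a)\le(1+4\epsilon)G(m_i)$, giving property~$1$.

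Next I would prove property~$2$: if $W$ has an element $m_i$ with $\pi_\epsilon(m_i) \ge 20^5\sqrt{F_2^{res}(M(W))}$, then the algorithm outputs $G(a)$, a $(1\pm4\epsilon)$-approximation of $G(m_i)$. The first task is to show such an $m_i$ is $(F_2,1)$-heavy — indeed strongly heavy — so that the split test does not fire and $a$ is a good approximation of $m_i$. Since $\pi_\epsilon(m_i) \le m_i$ always, the hypothesis forces $m_i \ge 20^5\sqrt{F_2^{res}(M(W))}$, hence $m_i^2 \ge 20^{10} F_2^{res}(M(W)) = 20^{10}\sum_{j\ne i}m_j^2$, which makes $m_i$ wildly $(F_2,1)$-heavy and in fact $(F_2, 20^{10})$-heavy, so certainly no $(F_2,(20)^4/10^4)$-heavy-element obstruction can hold for a different element and the split test fails to fire (again by Lemma~\ref{lem:majorseparation}, the heavy element always ends up in $\min\{X',Y'\}$ on the small side, so $X',Y'$ are far apart and the terminate condition is false). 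Then by Lemma~\ref{lem:heavyapprox}, $|a-m_i| \le 2\sqrt{F_2^{res}(M(W))}+\epsilon' m_i \le 0.1\pi_\epsilon(m_i)$ once $\epsilon'$ is a small enough inverse polylog (using $\epsilon' m_i \le \epsilon' \cdot (\text{poly})\cdot\pi_\epsilon(m_i)$? — no: here I instead bound $\epsilon' m_i$ against $\pi_\epsilon(m_i)$ directly only if $\pi_\epsilon(m_i)$ is not too small relative to $m_i$, which is exactly what tractability's Predicate~(\ref{eqn:tract2}) gives; alternatively, and more cleanly, I note $2\sqrt{F_2^{res}} \le \tfrac{2}{20^5}\pi_\epsilon(m_i)$ and choose $\epsilon'$ so that $\epsilon' m_i$ also fits, since $a\le(1+\epsilon')\sqrt{F_2(M(W))}$ and $F_2(M(W)) \le m_i^2 + F_2^{res} \le 2m_i^2$, so $m_i = \Theta(a)$ and the needed inequality $\epsilon' m_i \le 0.05\pi_\epsilon(m_i)$ is a consequence of $\pi_\epsilon(m_i)\ge x/\log^r x$ from Predicate~(\ref{eqn:tract2})). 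Similarly $b, 2\epsilon'a < 0.1\pi_\epsilon(m_i)$. Now Lemma~\ref{lem:Gapprox} (with $x=m_i$, $u=a$, $v=b$, $y=2\epsilon'a$, noting $\pi_\epsilon(m_i)>1$ which holds since $\epsilon < 1$ and $G(1)>0$ so $\pi_\epsilon$ can be forced $>1$ for the relevant large $x$) yields $(1-4\epsilon)G(a+b+2\epsilon'a)\le G(a)\le(1+4\epsilon)G(a-b-2\epsilon'a)$ — precisely the condition the final test checks — so the algorithm does not output $0$ there, and outputs $G(a)$; combined with Lemma~\ref{lem:Gapprox2} (since $|a-m_i|\le b+2\epsilon'a$) this $G(a)$ is a $(1\pm4\epsilon)$-approximation of $G(m_i)$.

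Finally I would collect the probabilistic accounting: there are $O(\log(nN))$ smooth-histogram instances for $X',Y'$, one for $a$, and the internal $O(\log(nN))$-many estimators inside Residual-Approximation, each failing with negligible probability; their union is still negligible, so all conclusions above hold except with negligible probability, as required. The space bound is immediate since every subroutine (smooth histograms for $L_2$/$F_2$, Residual-Approximation by Lemma~\ref{lem:resapprox}) runs in polylogarithmic space and one pass, and we run polylogarithmically many in parallel. I expect the main obstacle to be the bookkeeping around $\epsilon'$: one must verify that a single inverse-polylog choice of $\epsilon' = 1/\log^{\Omega(1)}(N)$ simultaneously makes $2\epsilon'a$ small compared to $0.1\pi_\epsilon(m_i)$ in the property-$2$ case — and this is exactly where Predicate~(\ref{eqn:tract2}) of tractability (lower-bounding $\pi_\epsilon(x)$ by $x/\log^r x$) is essential, since without it $\pi_\epsilon(m_i)$ could be too small to absorb the $\epsilon' m_i$ slack; getting the constants ($0.1$, $20^5$, $20^{10}$, the $4\epsilon$ blow-ups through Lemmas~\ref{lem:Gapprox} and~\ref{lem:Gapprox2}) to line up is the delicate part, but it is routine once the structure above is in place.
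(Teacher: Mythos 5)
Your proposal follows essentially the same route as the paper's proof: rule out (or detect) a heavy element via Lemma~\ref{lem:majorseparation} applied to the squared frequency vector, relate $a$ to the heavy frequency $m_i$ via Lemma~\ref{lem:heavyapprox} and Lemma~\ref{lem:resapprox} so that $|m_i-a|\le b+2\epsilon' a$, use Lemma~\ref{lem:Gapprox2} to convert a passed final test into a $(1\pm 4\epsilon)$-approximation of $G(m_i)$ (property $1$), and in the $\pi_\epsilon(m_i)\ge 20^5\sqrt{F_2^{res}}$ case use Predicate~(\ref{eqn:tract2}) with $\epsilon'$ an appropriate inverse polylog to make $|m_i-a|$, $b$, $2\epsilon' a$ all at most $0.1\pi_\epsilon(m_i)$ so that Lemmas~\ref{lem:Gapprox} and~\ref{lem:Gapprox2} force the test to pass and certify the output (property $2$). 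The one slip is in property $1$: since the test is on the $(1\pm 0.2)$-approximations $X',Y'$ rather than $X,Y$, invoking Lemma~\ref{lem:majorseparation} with $K=20^4$ does not let you conclude the terminate step fires when no $(F_2,16)$-heavy element exists (the factor $\tfrac{25}{16}$ from the two approximations ruins that constant); the paper instead applies the lemma so that the hypothesis is ``no $(F_2,2)$-heavy element,'' which leaves enough slack, and since your downstream argument only needs an $(F_2,2)$-heavy element, the fix is exactly this constant adjustment.
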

\begin{proof}

First, we show that if there is no $(F_2,2)$-heavy entry in the current window $W$, then the output
is $0$ except with negligible probability.  Consider a single iteration of the main loop of the algorithm.
Let $M'$ be the vector with entries $m_i^2$ and denote
$X = \langle M', H \rangle, Y = |M'| - \langle M', H \rangle$.
Since we have an $F_2$ approximation over sliding windows, except with negligible probability, $X'$ and $Y'$ are $(1 \pm .2)$-approximations
of $X$ and $Y$, respectively.  Hence, $\frac{4}{5}X \leq X' \leq \frac{5}{4}X$ and $\frac{4}{5}Y \leq Y' \leq \frac{5}{4}Y$.
By Lemma \ref{lem:majorseparation}, except with probability at most $0.5+o(1)$:
$X' \le \frac{5}{4} X \le \frac{5}{2} (10)^4 Y < (20)^4  Y'$ and $Y' < (20)^4 X'$.
Thus, the algorithm outputs $0$ except with negligible probability.

Assume that there is an $(F_2,2)$-heavy entry $m_i$.
Then, applying Lemma~\ref{lem:heavyapprox} with some $0 < \epsilon' < 1$ to be set later, we know
$|m_i - a| \le 2\sqrt{F_2(M(W))} + \epsilon' m_i$ and $a \geq (1-\epsilon')m_i$
(except with negligible probability).
By Lemma~\ref{lem:resapprox},
it follows that $2\sqrt{F^{res}_2(M(W))} < b < 3\sqrt{F^{res}_2(M(W))}$ except with negligible probability.
Hence, we have $|m_i - a| \leq b + \epsilon' m_i \leq b + 2 \epsilon' a$, since
$2 \epsilon' a \geq 2 \epsilon' (1- \epsilon')m_i \geq \epsilon' m_i$ (assuming $\epsilon' \leq \frac{1}{2}$).
Now, observe that if the algorithm outputs $G(a)$, then it must be that
$(1-4\epsilon)G(a+b+2 \epsilon' a) \leq G(a) \leq (1+4\epsilon)G(a-b-2 \epsilon' a)$.
Thus, by applying Lemma~\ref{lem:Gapprox2} with parameters $x=m_i, u = a, v = b$, and $y=2 \epsilon' a$,
it follows that if the algorithm outputs $G(a)$, then $G(a)$ is a $(1\pm 4\epsilon)$-approximation of $G(m_i)$.
Thus, the first condition of Hybrid-Major follows.

Finally, assume $\pi_{\epsilon}(m_i) \ge (20)^5\sqrt{F^{res}_2(M(W))}$. By definition,
$m_i \ge \pi_{\epsilon}(m_i)$ and so $m_i$ is $(F_2,20^{10})$-heavy with respect to $M(W)$.
By Lemma~\ref{lem:majorseparation}, we have (except with negligible probability):
$X' > 20^4 Y' \text{ or } Y' > 20^4 X'$.
Hence, except with negligible probability, the algorithm will not terminate before the last line.
Let $N_1$ be the constant given by the definition of
tractability in Definition~\ref{defn:tractability}.  We assume $m_i \geq N_1$ (otherwise
the number of elements in the window is constant).  Also, let $r$
be given by Definition~\ref{defn:tractability}.
By applying Lemma~\ref{lem:heavyapprox} with $\epsilon' = \frac{1}{\log^{r+1}(N)}$, we have
$
|m_i - a| \le 2\sqrt{F^{res}_2(M(W))} +  \epsilon' m_i \leq 0.01\pi_{\epsilon}(m_i) + \frac{1}{\log N}\frac{m_i}{\log^r(m_i)} \leq
.01 \pi_{\epsilon}(m_i) + \frac{\pi_{\epsilon}(m_i)}{\log N} \leq .02 \pi_{\epsilon}(m_i)$ for sufficiently large $N$ (since $G$
is tractable) and
$b \le 3F^{res}_2(M(W)) < 0.01\pi_{\epsilon}(m_i)$.
Since $b \leq .1 \pi_{\epsilon}(m_i)$
and $2 \epsilon' a \leq 2 \epsilon' (m_i + b + \epsilon' m_i) \leq 2 \cdot (.03 \pi_{\epsilon}(m_i)) \leq .1 \pi_{\epsilon}(m_i)$,
then by Lemmas~\ref{lem:Gapprox} and~\ref{lem:Gapprox2} (which we apply with the same parameters,
$x=m_i, u = a, v = b$, and $y=2 \epsilon' a$), the algorithm outputs $G(a)$ which is a $(1 \pm 4\epsilon)$-approximation
of $G(m_i)$.  Thus, the second condition of Hybrid-Major follows.
\end{proof}

\noindent
The next two lemmas are from~\cite{BO10} (the proof of Lemma~\ref{lem:G-major} uses Predicate (\ref{eqn:tract1})
from Definition~\ref{defn:tractability}).
\begin{lem}\label{lem:tractcube}
If $G$ is tractable, then
$\exists N_1 \forall N>N_1 \in \mathbb{N^+}: G(N) \le N^3$.
\end{lem}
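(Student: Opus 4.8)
The plan is to derive the bound $G(N) \le N^3$ for large $N$ directly from Predicate~(\ref{eqn:tract1}) of tractability, applied with a well-chosen comparison point. The idea is that Predicate~(\ref{eqn:tract1}) bounds how large a ratio $R = G(x)/G(y)$ can be in terms of the local jump $\pi_\epsilon(x)$: since $\pi_\epsilon(x) \le x$ always (by definition of the local jump), the inequality $(\pi_\epsilon(x)/y)^2 \ge R/\log^t(Rx)$ forces $R \le (x/y)^2 \log^t(Rx)$. So if we take $y = 1$, then $R = G(x)/G(1)$ and we get $G(x)/G(1) \le x^2 \log^t(Rx)$, i.e. $G(x) = O(x^2 \log^t(x \cdot G(x)))$. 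Since $G(1) > 0$ is a fixed positive constant (tractability requires $G(1) > 0$), this is already almost the desired bound; the logarithmic factor $\log^t(x G(x))$ needs to be absorbed.

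Concretely, first I would fix $k$ (any constant, say $k = 1$) and obtain the corresponding $N_0$ and $t$ from Predicate~(\ref{eqn:tract1}). Then, for $x$ large enough that $R := G(x)/G(1) > N_0$ (such $x$ exist and eventually all large $x$ work, since $G$ is non-decreasing and — if $G$ were bounded the statement is trivial — $R \to \infty$; the bounded case gives $G(N) \le N^3$ immediately for large $N$), and with $\epsilon$ chosen to satisfy $\epsilon > 1/\log^k(Rx)$ (e.g. $\epsilon = 1/2$, valid for large $x$), apply the predicate with $y = 1$: it yields $(\pi_\epsilon(x)/1)^2 \ge R/\log^t(Rx)$. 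Using $\pi_\epsilon(x) \le x$ gives $x^2 \ge R/\log^t(Rx)$, hence $G(x) = G(1) \cdot R \le G(1)\, x^2 \log^t(R x) = G(1)\, x^2 \log^t(x G(x)/G(1))$.

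It remains to conclude $G(x) \le x^3$ for all sufficiently large $x$. This is a routine bootstrapping step: suppose for contradiction that $G(x) > x^3$ for some large $x$; plugging the weaker a priori bound — note $G$ non-decreasing plus the trivial fact that $G$ on integers can be crudely bounded, or simply bootstrap — into $\log^t(x G(x))$, we get $\log^t(x G(x)) \le \log^t(x \cdot (\text{poly})) = O(\log^t x)$, which grows slower than any positive power of $x$. Thus $G(1)\, x^2 \log^t(x G(x)) = o(x^3)$, contradicting $G(x) > x^3$ for large $x$. More carefully, to avoid circularity in bounding $\log^t(x G(x))$, one observes that if $G(x) \le x^3$ fails then in particular $G(x) \le G(1) x^2 \log^t(x G(x))$ forces $G(x) \le x^{2.5}$ say (since $\log^t(x G(x)) \le \log^t(x G(x))$ and $G(x)$ appears only inside a logarithm, any polynomial a priori bound $G(x) \le x^c$ — available since otherwise $\log^t(xG(x))$ makes the RHS at most $G(1) x^2 \cdot (\text{polylog in } G(x))$, and $x^2$ times polylog of anything at most exponential in $x$ is still $\le x^3$ eventually). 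I expect the main obstacle to be exactly this self-referential logarithm: making rigorous that $\log^t(x\,G(x))$ is $O(\log^t x)$ requires first establishing a crude polynomial (or even singly-exponential) upper bound on $G(x)$, after which the clean bound $G(x) \le x^3$ follows for $N > N_1$. Everything else is a direct substitution into Predicate~(\ref{eqn:tract1}) with $y=1$, $\epsilon = 1/2$.
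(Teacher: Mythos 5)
Your approach is sound, and it is essentially the intended one: the paper itself gives no proof of Lemma~\ref{lem:tractcube} (it defers to~\cite{BO10}), and the natural argument is exactly what you do — apply Predicate~(\ref{eqn:tract1}) with $y=1$, use $\pi_\epsilon(x)\le x$ (immediate from the definition of the local jump), and handle the degenerate case $R\le N_0$ by monotonicity of $G$. The only place your write-up wobbles is the final ``self-referential logarithm'' step, which you flag as the main obstacle; in fact no a priori crude bound on $G$ is needed, because the inequality is self-correcting, since $R$ appears on the right only inside the logarithm. Concretely, from $R\le x^2\log^t(Rx)$ set $u=Rx$, so $u\le x^3\log^t u$. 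Let $C_t$ be a constant such that $\log^t u<u^{1/4}$ for all $u>C_t$. If $u>\max(x^4,C_t)$, then $u\le x^3\log^t u< u^{3/4}\cdot u^{1/4}=u$, a contradiction; hence $u\le\max(x^4,C_t)$, so for all sufficiently large $x$ we have $\log^t(Rx)\le(4\log x)^t$, and therefore $G(x)=G(1)\,R\le G(1)\,x^2(4\log x)^t\le x^3$ once $x$ exceeds a threshold $N_1$ depending only on the constants $G(1)$ and $t$. With this two-line inversion replacing your bootstrap sketch, and with your observations that $\epsilon=1/2$ and $k=1$ are admissible for large $x$ and that the case where $G(x)/G(1)\le N_0$ for all $x$ is trivial, the proof is complete.
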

\begin{lem}\label{lem:G-major}
Let $G$ be a non-decreasing tractable function. Then for any $k = O(1)$, there
exists $t = O(1)$ such that for any $n,N$ and for any $\epsilon > \log^{-k}(nN)$
the following holds. Let $D(n,N)$ be a stream and $W$ be the current window.
If there is a $(G,1)$-heavy element $m_i$ with respect to $M(W)$, then there is
a set $S \subseteq [n]$ such that $|S| = O(\log(N))$ and:
$
\pi^2_{\epsilon}(m_i) = \Omega \left( \log^{-t}(nN)\sum_{j\notin S\cup\{i\}} m_j^2 \right).
$
\end{lem}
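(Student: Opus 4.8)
The plan is to partition the coordinates $j \neq i$ into a small ``exceptional'' set $S$ and the rest, to control each remaining coordinate by a single application of Predicate~(\ref{eqn:tract1}), and then to add up the resulting per-coordinate bounds using the $(G,1)$-heaviness hypothesis together with the cubic growth bound of Lemma~\ref{lem:tractcube}.

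First I would obtain, from Definition~\ref{defn:tractability} applied with a parameter chosen as a suitable function of the given $k$, the constants $N_0$ and $t$. Then I would let $S$ collect the indices $j \neq i$ whose value is ``large'' relative to $G(m_i)$, say $G(m_j) \ge G(m_i)/\lambda$ for a threshold $\lambda \ge N_0$ that is polylogarithmic in $nN$; since $\sum_{j \neq i} G(m_j) < G(m_i)$ by heaviness, at most $\lambda$ indices qualify, so $|S|$ is polylogarithmic in $nN$. For any remaining coordinate $j \notin S \cup \{i\}$ with $m_j \ge 1$, set $R_j = G(m_i)/G(m_j) > \lambda \ge N_0$; once the precondition $\epsilon > \log^{-k}(R_j m_i)$ is in force (see below), Predicate~(\ref{eqn:tract1}) applied with $x = m_i$, $y = m_j$, $R = R_j$ yields $(\pi_\epsilon(m_i)/m_j)^2 \ge R_j/\log^t(R_j m_i)$, i.e.
\[
m_j^2 \;\le\; \pi_\epsilon^2(m_i)\,\frac{G(m_j)}{G(m_i)}\,\log^t(R_j m_i).
\]

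Next I would sum this over all $j \notin S \cup \{i\}$. Because $m_i \le N$ and $G(m_i) \le m_i^3$ by Lemma~\ref{lem:tractcube}, we have $R_j m_i \le m_i^4/G(1) = O(\mathrm{poly}(N))$, so $\log^t(R_j m_i) = O(\log^t(nN))$ uniformly in $j$; pulling this factor out and using $\sum_{j \notin S \cup \{i\}} G(m_j) \le \sum_{j \neq i} G(m_j) < G(m_i)$ gives
\[
\sum_{j \notin S \cup \{i\}} m_j^2 \;\le\; O(\log^t(nN))\,\frac{\pi_\epsilon^2(m_i)}{G(m_i)}\sum_{j \notin S \cup \{i\}} G(m_j) \;=\; O(\log^t(nN))\,\pi_\epsilon^2(m_i),
\]
which is the desired inequality after renaming $t$. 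The only remaining case is when $m_i$ is bounded by a constant depending just on $G$ and $k$: then $G(m_i) = O(1)$, so by heaviness there are only $O(1)$ nonzero coordinates other than $i$, each at most $m_i = O(1)$, whence $\sum_{j \neq i} m_j^2 = O(1) \le \log^t(nN)\,\pi_\epsilon^2(m_i)$ trivially (using $\pi_\epsilon(m_i) \ge 1$), and we may take $S = \emptyset$.

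I expect the crux to be verifying the precondition of Predicate~(\ref{eqn:tract1}) for every coordinate outside $S \cup \{i\}$ while simultaneously keeping $|S| = O(\log N)$: enlarging the exclusion threshold $\lambda$ makes $R_j m_i$ (hence the condition $\epsilon > \log^{-k}(R_j m_i)$) easier to satisfy but makes $S$ larger, so the two requirements must be balanced by choosing the tractability parameter carefully relative to $k$ and by precisely delineating the intermediate range of $m_i$ — the values that are neither ``constant-size'' nor ``large enough for every $R_j m_i$ to clear the threshold.'' Handling that middle range is the part of the argument I would expect to require the most care, and it is where, in addition to $G(N) \le N^3$ and $m_i \le N$, one must exploit the fact that heaviness forces the window to contain few distinct elements, so as to keep the residual sum $\sum_{j \neq i} m_j^2$ within a polylogarithmic factor of $\pi_\epsilon^2(m_i)$.
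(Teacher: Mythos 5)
Your skeleton is the natural one for this lemma --- exclude the few indices whose $G$-value is close to $G(m_i)$, hit each remaining coordinate with Predicate~(\ref{eqn:tract1}) at $x=m_i$, $y=m_j$, $R_j=G(m_i)/G(m_j)$, sum the per-coordinate bounds using $\sum_{j\neq i}G(m_j)<G(m_i)$, and absorb the logarithmic factor via Lemma~\ref{lem:tractcube} and $m_i\le N$ --- and the summation step, the cubic-growth bound, and the two extreme regimes (constant $m_i$; large $m_i$) are handled correctly. (Note that the paper itself does not prove this lemma; it imports it from~\cite{BO10}, so there is no in-paper proof to compare against.) Also, heaviness plus monotonicity give $m_j\le m_i$ for all $j\ne i$, which you implicitly use and which is worth stating.

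The genuine gap is exactly the ``middle range'' you flag in your last paragraph, and the balancing you propose cannot close it. Invoking Predicate~(\ref{eqn:tract1}) with parameter $k'$ requires $\epsilon>\log^{-k'}(R_jm_i)$, while you only know $\epsilon>\log^{-k}(nN)$; since $R_j\le G(m_i)/G(1)\lesssim m_i^3$ (Lemma~\ref{lem:tractcube}), we have $R_jm_i\lesssim m_i^4$, so the precondition is implied only when $R_jm_i\ge 2^{\log^{k/k'}(nN)}$, a \emph{quasi-polylogarithmic} threshold. A polylogarithmic $\lambda$ buys nothing here, because $\log^{k'}(\mathrm{polylog}(nN))=\Theta\bigl((\log\log nN)^{k'}\bigr)\ll\log^{k}(nN)$; and raising $\lambda$ to the quasi-polylogarithmic threshold destroys the cardinality bound, since your counting only gives $|S|\le\lambda$, which then far exceeds the required $O(\log N)$ (indeed your stated $|S|=\mathrm{polylog}(nN)$ already overshoots the lemma's $O(\log N)$, whereas in the large-$m_i$ regime $\lambda=N_0=O(1)$ suffices, so the polylogarithmic choice is unnecessary where it works and insufficient where it doesn't). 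Nor do trivial bounds rescue the regime where $m_i$ lies between $O(1)$ and $2^{\log^{k/k'}(nN)}$: there one only has $\pi_\epsilon(m_i)\ge1$, $m_j\le m_i$, and at most $G(m_i)/G(1)\lesssim m_i^3$ nonzero coordinates, giving $\sum_{j\ne i}m_j^2\lesssim m_i^5$, which is super-polylogarithmic in $nN$; Predicate~(\ref{eqn:tract2}) does not apply either, since its precondition $\epsilon>\log^{-k''}(m_i)$ fails for the same reason (and monotonicity of $\pi_\epsilon$ in $\epsilon$ goes the wrong way if you try to use a larger $\epsilon$). So as written your argument proves the lemma only for $m_i=O(1)$ or $m_i$ above the quasi-polylogarithmic threshold; the intermediate regime, which you correctly identify as the crux, is left genuinely open and needs an additional idea beyond anything in your proposal.
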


We now give the algorithm Compute-$G$-Core, which solves the $G$-Core problem
(i.e., Problem~\ref{prob:heavy}), and prove its correctness.
A similar algorithm appears in~\cite{BO10}, we
repeat it here for completeness, and to help design and understand our main result on universality.

\begin{algorithm}\label{alg:heavy}
\DontPrintSemicolon
Generate a pairwise independent hash function $H: [n] \mapsto \tau$, where $\tau = O^*\left({1\over p}\right)$. \;
$\forall k\in [\tau]$, compute in parallel $c_k$ = Compute-Hybrid-Major$(D_{H_k}, \frac{\epsilon}{4})$,
where $H_k(i) = \one_{H(i) = k}$. \;
Output $S = \{c_k : c_k > 0\}$. \;
\caption{Compute-$G$-Core$(D, \epsilon, p)$}
\end{algorithm}

\begin{thm}\label{thm:gcore}
Algorithm Compute-$G$-Core solves $G$-Core$(D, \epsilon, p)$, except with probability asymptotically equal to $p$.
Compute-$G$-Core uses $O^*(1)$ memory bits if $p = \Omega(1/\log^{u}(nN))$ and $\epsilon = \Omega(1/\log^{k}(nN))$
for some $u,k \geq 0$.
\end{thm}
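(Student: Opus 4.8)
\textbf{Proof proposal for Theorem~\ref{thm:gcore}.}

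The plan is to analyze Algorithm~\ref{alg:heavy} by treating each of the $\tau = O^*(1/p)$ parallel calls to Compute-Hybrid-Major on the hashed substreams $D_{H_k}$ as a black box governed by Lemma~\ref{lem:hybridmajor}. I would first dispose of the ``global'' facts. Since $p = \Omega(1/\log^u(nN))$ we have $\tau = O^*(1)$, so a union bound over the $\tau$ invocations shows every call to Compute-Hybrid-Major correctly solves its Hybrid-Major instance except with negligible probability; condition on this event. Under it, condition~1 of Hybrid-Major (run with parameter $\epsilon/4$, hence accuracy $4\cdot(\epsilon/4)=\epsilon$) says each bucket $k$ with $c_k>0$ has $c_k$ a $(1\pm\epsilon)$-approximation of $G$ evaluated at the $j$-th coordinate of $M(W_{H_k})$ for some $j$; because $G(0)=0$ and $c_k>0$, that coordinate is $m_j$ with $H(j)=k$, so $S=\{c_k:c_k>0\}$ has exactly the form required by $G$-Core, with the underlying indices pairwise distinct (distinct buckets force distinct elements) and $\ell\le\tau=O^*(1)$. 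This gives the approximation requirement of $G$-Core and the cardinality bound. For the space bound, each Compute-Hybrid-Major runs $O(\log(nN))$ smooth-histogram $F_2/L_2$ estimators and one call to Residual-Approximation, all polylogarithmic (Lemma~\ref{lem:resapprox}); running $\tau=O^*(1)$ copies in parallel and storing the pairwise independent hash $H$ in $O(\log n+\log\tau)=O^*(1)$ bits yields $O^*(1)$ total space.

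The substantive step is showing a $(G,1)$-heavy element $m_i$ of $M(W)$ is captured except with probability asymptotically $p$. Fix such an $i$ and apply Lemma~\ref{lem:G-major} with parameter $\epsilon/4$ (valid since $\epsilon/4=\Omega(1/\log^{k}(nN))$): this yields $t=O(1)$ and a set $S_0\subseteq[n]$ with $|S_0|=O(\log N)$ and $\pi_{\epsilon/4}^2(m_i)=\Omega\!\left(\log^{-t}(nN)\sum_{j\notin S_0\cup\{i\}}m_j^2\right)$. The key point is that condition~2 of Hybrid-Major, applied in the bucket $D_{H_{H(i)}}$ containing $i$, returns a $(1\pm\epsilon)$-approximation of $G(m_i)$ as soon as $\pi_{\epsilon/4}(m_i)\ge 20^5\sqrt{\sum_{j:\,H(j)=H(i),\,j\ne i}m_j^2}$ (and then $m_i$ is automatically $(F_2,1)$-heavy in its bucket since $\pi_{\epsilon/4}(m_i)\le m_i$, so the relevant $F_2^{res}$ is well defined and equals that sum). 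I would secure this via two events over the choice of $H$: (A) no element of $S_0\setminus\{i\}$ lands in bucket $H(i)$, failing with probability at most $|S_0|/\tau=O(\log N)/\tau$ by pairwise independence; and (B) $\sum_{j:\,H(j)=H(i),\,j\notin S_0\cup\{i\}}m_j^2\le\frac{1}{20^{10}}\pi_{\epsilon/4}^2(m_i)$, whose failure, since the left side has expectation $\frac{1}{\tau}\sum_{j\notin S_0\cup\{i\}}m_j^2=O\!\left(\log^t(nN)/\tau\right)\pi_{\epsilon/4}^2(m_i)$, is at most $O\!\left(\log^t(nN)/\tau\right)$ by Markov. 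Choosing the polylogarithmic factor hidden in $\tau=O^*(1/p)$ (which depends on $k$ and the resulting $t$) large enough drives each of these to at most $p/2$; on their intersection $\sum_{j:\,H(j)=H(i),\,j\ne i}m_j^2\le\frac{1}{20^{10}}\pi_{\epsilon/4}^2(m_i)$, so condition~2 applies and, together with the conditioned-out negligible event, $c_{H(i)}$ is a $(1\pm\epsilon)$-approximation of $G(m_i)$. Since $m_i\ge1$ and $G$ is tractable, $G(m_i)\ge G(1)>0$, so $c_{H(i)}>0$, hence $H(i)$ contributes to $S$ and $i$ is among the captured indices. Adding the failure probabilities ($p/2+p/2$ plus negligible) gives $p(1+o(1))$, asymptotically $p$.

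I expect the main obstacle to be the design of event (B) and the consequent sizing of $\tau$. The naive approach --- merely isolate $m_i$ from the $O(\log N)$ ``bad'' coordinates $S_0$ --- does not work, because Lemma~\ref{lem:G-major} controls $\sum_{j\notin S_0\cup\{i\}}m_j^2$ only up to a $\log^t(nN)$ slack, which runs against the $20^{10}$ separation threshold required by condition~2 of Hybrid-Major. Closing this gap forces the extra Markov step and hence $\tau=\Omega\!\left(\log^{t+O(1)}(nN)/p\right)$, which remains $O^*(1/p)=O^*(1)$, so the space bound is unaffected; what is left is straightforward probability bookkeeping to bring the total error down to asymptotically $p$ rather than merely $O(p)$.
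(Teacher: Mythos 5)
Your proposal is correct and follows essentially the same route as the paper's own proof: per-bucket correctness via Lemma~\ref{lem:hybridmajor}, then capturing the $(G,1)$-heavy element by combining Lemma~\ref{lem:G-major} with isolation of the $O(\log N)$ bad coordinates (using pairwise independence) and a Markov bound on the remaining bucket mass, with $\tau = O^*(1/p)$ sized to absorb the $\log^t(nN)$ slack. Your events (A) and (B) are exactly the paper's events $\mathcal{B}$ and $\mathcal{L}$, and if anything you are slightly more careful by tracking $\pi_{\epsilon/4}$ consistently where the paper writes $\pi_\epsilon$.
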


\begin{proof}
Let $W$ denote the current window.  First, except with negligible probability, every positive $c_i$ is a
$(1 \pm 4 \cdot \frac{\epsilon}{4})$-approximation of some distinct entry $G(m_{j})$, which implies that
$c_i$ is a $(1 \pm \epsilon)$-approximation of $G(m_j)$.  Second, assume that there exists a $(G,1)$-heavy
entry $m_i$ with respect to $M(W)$. Denote $X = \sum_{j \neq i} m_j^2 \one_{H(j) = H(i)}$. By pairwise independence of $H$, we have
$E(X) = {1\over \tau}(F_2(M) - m_i^2)$.  By Lemma \ref{lem:G-major}, there exists a set $S$ and $t \geq 0$ such that
$|S| = O(\log N)$ and:
\begin{equation}\label{eq:G-major-eqn}
\pi^2_{\epsilon}(m_i) = \Omega\left(\frac{\sum_{j\notin S\cup\{i\}} m_j^2}{\log^{t}(nN)}\right).
\end{equation}
Let $\mathcal{L}$ be the event that $\pi^2_{\epsilon}(m_i) > 20^{10} X$, and let $\mathcal{B}$ be the event that $\forall j \in S : H(j) \neq H(i)$.
By Markov's inequality, by pairwise independence of $H$, and by $(\ref{eq:G-major-eqn})$, there exists $\tau = O^*\left({1\over p}\right)$ such that:
$$
P(\neg\mathcal{L}) = P(\neg\mathcal{L} \ | \ \mathcal{B} ) \cdot P(\mathcal{B}) + P(\neg\mathcal{L} \ | \ \neg\mathcal{B}) \cdot P(\neg \mathcal{B}) \leq
\frac{E(X \ | \ \mathcal{B}) 20^{10}}{\pi^2_\epsilon(m_i)} \cdot 1 + 1 \cdot \frac{O(\log N)}{\tau} \leq O^*\left(\frac{1}{\tau}\right) = p.
$$
If $\mathcal{L}$ occurs (which happens with probability at least $1-p$), then $c_{H(i)}$ is a $(1 \pm \epsilon)$-approximation of $G(m_i)$
except with negligible probability (by Lemma~\ref{lem:hybridmajor}).  Thus, the final probability of error is approximately equal to $p$.

It is not too hard to see that Algorithm Compute-$G$-Core uses polylogarithmic memory.  The subroutine depth is constant, and there are only
polylogarithmically many subroutine calls at each level.  At the lowest level, we only do direct computations on the stream that require polylogarithmic
space or a smooth histogram computation for $F_2$ or $L_2$, which also requires polylogarithmic space.
We get that for any constant $k$, there exists a constant $t$ such that we can solve $G$-Core$(D, \epsilon,p)$ (except with probability $p$) using
$O(\log^t(nN))$ space, where $\epsilon\ge \log^{-k}(nN)$.
\end{proof}

In Appendix~\ref{apx:sum}, we show how to reduce the $G$-Sum problem to the $G$-Core problem.  In
particular, we prove the following theorem.  The algorithm and proof of correctness follow from~\cite{BO13}.
We restate the algorithm and results using our notation for completeness.
\begin{thm}\label{thm:gsum from gcore}
If there is an algorithm that solves $G$-Core using memory $O^*(1)$ and makes one pass over
$D$ except with probability $O(\log^{-u}(nN))$ for some $u>0$, then there is an algorithm that
solves $G$-Sum using memory $O^*(1)$ and makes one pass over $D$ except with probability at most $0.3$.
\end{thm}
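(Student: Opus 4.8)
}
The plan is to reduce $G$-Sum to $G$-Core by the geometric-subsampling technique of Indyk and Woodruff~\cite{IW05}, in the form used by~\cite{BO13}. First I would set up $L = O(\log(nN))$ nested sampling levels: generate hash functions (with the same bounded independence already used inside Compute-$G$-Core) so that level $\ell$ retains each universe element independently with probability $2^{-\ell}$, producing substreams $D_0 \supseteq D_1 \supseteq \cdots \supseteq D_L$. On each $D_\ell$ I run Compute-$G$-Core with failure parameter $p = \Theta(\epsilon/\log^{O(1)}(nN))$ and accuracy parameter $\Theta(\epsilon/\log^{O(1)}(nN))$; by Theorem~\ref{thm:gcore} each call uses $O^*(1)$ space, and since there are only polylogarithmically many calls in total, a union bound makes the probability that any of them violates its $G$-Core guarantee negligible. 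Each successful call returns $O^*(1)$ pairs $(x,j)$ with $(1-\epsilon)G(m_j)\le x\le (1+\epsilon)G(m_j)$ for the elements that are $(G,1)$-heavy in the corresponding subsampled frequency vector.

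Next I would bucket the universe by the magnitude of $G(m_i)$. Elements with $G(m_i)=0$ contribute nothing, and every active element satisfies $G(m_i)\ge G(1)=\Omega(1)$ and $G(m_i)\le G(N)\le N^3$ by Lemma~\ref{lem:tractcube}, so the nonzero values fall into $O(\log(nN))$ classes $C_s=\{\,i : 2^s\le G(m_i)<2^{s+1}\,\}$. Call $C_s$ \emph{significant} if $|C_s|\cdot 2^{s+1}\ge \tfrac{\epsilon}{L}\sum_i G(m_i)$; the combined contribution of all insignificant classes is at most $\epsilon\sum_i G(m_i)$, so it suffices to estimate $\sum_{i\in C_s}G(m_i)$ to within a $(1\pm\epsilon)$ factor for each significant $s$ and sum the estimates (boundary elements whose $G$-value lies within $(1\pm\epsilon)$ of a power of two are absorbed into one of the two adjacent estimators by a standard argument).

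For a significant class $C_s$ I would use the level $\ell_s$ at which $|C_s|\,2^{-\ell_s}=\Theta(1)$, so a constant number of elements of $C_s$ survive the subsampling. The key claim is that at this level a surviving element $i\in C_s$ is $(G,1)$-heavy with respect to the subsampled vector with probability $1-o(1)$: since $C_s$ is significant, the $G$-mass of all strictly heavier classes is at most a $\mathrm{poly}\log(nN)$ factor times $|C_s|\,2^s$, and thinning by $2^{-\ell_s}$ down to $\Theta(1)$ survivors from $C_s$ suppresses the heavier classes (and the lighter residual) enough that $G(m_i)$ dominates their sum with high probability — this is the subsampled analogue of Lemmas~\ref{lem:G-major} and~\ref{lem:hybridmajor}. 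Running $\Theta(\epsilon^{-2}\log^{O(1)}(nN))$ independent copies of level $\ell_s$, I would form the estimator $\widehat{S}_s = \frac{2^{\ell_s}}{\#\text{copies}}\sum x$, where the sum ranges over all recovered pairs $(x,j)$ over all copies whose value $x$ lands in the window attached to class $s$. Its expectation is $(1\pm O(\epsilon))\sum_{i\in C_s}G(m_i)$, and because each term is $O(2^{s+\ell_s})$ with only $\Theta(1)$ survivors per copy, the variance is small enough that Chebyshev gives a $(1\pm\epsilon)$-approximation with constant probability; a median over $O(\log(nN))$ repetitions boosts this to $1-o(1)$. Summing the $\widehat{S}_s$ over the $O(\log(nN))$ significant classes, and folding the $\epsilon/L$ per-class budget into a final $(1\pm\epsilon)$, yields a $(1\pm\epsilon)$-approximation of $\sum_i G(m_i)$ with total failure probability at most $0.3$, in $O^*(1)$ space and one pass.

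The main obstacle is the heaviness claim in the third step — proving that, at the level where $C_s$ has $\Theta(1)$ survivors, a survivor is genuinely $(G,1)$-heavy within the subsampled vector. This needs (a) a careful quantification of ``significant'' so that the combined $G$-mass of heavier classes, after subsampling, is an $o(1)$ fraction of a single survivor's value, and (b) control of the residual $F_2$-type quantities that Compute-Hybrid-Major depends on, which is where the tractability predicate~(\ref{eqn:tract1}) enters, exactly as it does in Lemma~\ref{lem:G-major}. The remaining technical points — fixing the constants in $p$ and in the number of copies, choosing the per-class acceptance windows to absorb the $(1\pm\epsilon)$ error of $G$-Core, and the union bounds over levels and copies — are routine and follow~\cite{BO13}.
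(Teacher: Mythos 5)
The step you yourself flag as the main obstacle is in fact where the argument breaks, and it does not break for lack of bookkeeping: the heaviness claim is false as stated. If $C_s$ is only barely significant, i.e.\ $|C_s|\,2^{s+1}\approx \frac{\epsilon}{L}\sum_i G(m_i)$, then at the level $\ell_s$ with $|C_s|2^{-\ell_s}=\Theta(1)$ the \emph{expected} surviving $G$-mass of the rest of the vector is $2^{-\ell_s}\sum_{j}G(m_j)=\Theta\bigl(\frac{L}{\epsilon}2^{s}\bigr)$, and the expected number of survivors from strictly heavier classes is $\Theta(L/\epsilon)$, not $o(1)$. So a survivor $i\in C_s$ with $G(m_i)\approx 2^s$ is typically \emph{not} $(G,1)$-heavy in the subsampled vector; it only contributes an $\Omega(\epsilon/\mathrm{polylog}(nN))$ fraction. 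Consequently Compute-$G$-Core, which per Problem~\ref{prob:heavy} and Theorem~\ref{thm:gcore} is only guaranteed to report $(G,1)$-heavy elements, need not return the survivors you base the estimator $\widehat{S}_s$ on, and the expectation computation for $\widehat{S}_s$ collapses. To make the Indyk--Woodruff route work you must first strengthen the core primitive to report $(G,\alpha)$-heavy elements for $\alpha=\Theta(\epsilon^2/\mathrm{polylog}(nN))$ (e.g.\ by hashing the universe into $\mathrm{polylog}(nN)/\epsilon^2$ buckets and running the $\alpha=1$ core in each bucket); this is exactly the kind of reduction the paper alludes to in Appendix~\ref{apx:sum} when it changes the heaviness parameter, and it is a needed idea, not a routine constant-fixing.

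Separately, note that the paper does not take the level-set route at all: it plugs Compute-$G$-Core into the recursive-sketch scheme of~\cite{BO13}, maintaining a chain of halved substreams $D_1\supseteq\cdots\supseteq D_\phi$, computing the exact value on the last (sparse) level, and unrolling $Y_k=2Y_{k+1}-\sum_{(g_i',j_i)\in Q_k}(1-2h^k_{j_i})g_i'$, with correctness delegated to Theorem~1 of~\cite{BO13}. That scheme sidesteps the per-class significance analysis and the misclassification issues at class boundaries entirely, which is precisely what makes the heaviness requirement manageable there. Your outline could be repaired along the lines above, but as written the central claim that drives the estimator is incorrect.
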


\noindent Note that we can reduce the failure probability from constant to inverse polynomial using standard techniques.
Combining this with Theorem~\ref{thm:gcore} and Theorem~\ref{thm:rlb}, we have Theorem~\ref{thm:zeroone}.

\section{Universality}\label{sec:universality}

In this section, we show the main result of this paper, Theorem~\ref{thm:mainusum},
by designing a universal sum algorithm.  We first construct a universal core algorithm, which we call $UCA$.
That is, given a data stream,
the algorithm produces a universal core structure with respect to the frequency vector $(m_1,\ldots,m_n)$
defined by the current window $W$ without knowing the function $G$ to be approximated in advance.  Let
$C$ be a constant and let $\mathcal{U}(C)$ be the set according to Definition~\ref{defn:utractability}.
The structure will have the guarantee that, when queried with any function $G$ taken from $\mathcal{U}(C)$ (after
processing the stream), it outputs the set $T$ according to Definition~\ref{defn:ucs}.

\paragraph{Universal Core Algorithm ($UCA$):} The algorithm will construct a universal core structure
$S$ and our techniques will build on the results from Section~\ref{sec:tract}.
Algorithm Residual-Approximation from Section~\ref{sec:tract} does not depend on the function $G$,
so it clearly carries over to our universal setting.

Algorithm Compute-Hybrid-Major does depend on $G$, so we have to modify it accordingly.
We do not rewrite the whole algorithm, as there are only a few modifications.  In Step 1,
we set $\epsilon' = \frac{1}{\log^{C+1}(N)}$.  We get rid of Steps 8 and 9, and instead create a new
Step 8 where we find the index $j$ of an $(F_2,2)$-heavy element $m_j$, if it exists (finding
such an index can be done using standard methods, the details of which we omit
for brevity).  We also create a new Step 9 where we output the triple $(a,b,j)$ (assuming none of the parallel
copies from Step 2 outputs 0).

We also need to modify Algorithm Compute-$G$-Core.  In particular, the value
of $\tau$ in Step 1 should depend on $C$, and we set it to be $\frac{\log^{C+2}(nN)}{p}$.  Moreover,
we remove Step 3 from the algorithm and store $c_k$ for each $k \in [\tau]$ as part of our
data structure $S$ (recall that $c_k$ is either 0 or a triple $(a_k,b_k,j_k)$, where $a_k,b_k$ are the values
computed in the $k^{th}$ parallel instance of the subroutine Compute-Hybrid-Major and $j_k$ is the index
of the corresponding $(F_2,2)$-heavy element).

\paragraph{Querying the Structure:} Given a function $G \in \mathcal{U}(C)$ as a query to our universal
core structure, we now explain how to produce the set $T$ according to Definition~\ref{defn:ucs}.

For each stored $c_k$ in the data structure $S$ ($k \in [\tau]$), if $c_k = 0$, then
we do not include it in our output set $T$.  Otherwise, if $c_k$ is a triple $(a_k,b_k,j_k)$,
then we include the pair $(G(a_k),j_k)$ in our set $T$ as long as
$(1-4\epsilon)G(a_k + b_k + 2 \epsilon' a_k) \leq G(a_k) \leq (1+4\epsilon)G(a_k - b_k - 2 \epsilon' a_k)$
(recall $\epsilon' = \frac{1}{log^{C+1}(N)}$).

\begin{thm}\label{thm:ucore}
Fix a parameter $C$ and let $\mathcal{U}(C)$ be the set of tractable functions corresponding to the
definition of universal tractability.  Then $UCA$ is a universal core algorithm with parameters
$\epsilon = \Omega(1/\log^k(nN))$ (for any $k \geq 0$), $\delta = \Omega(1/\log^{u}(nN))$
(for any $u \geq 0$), $\alpha = 1$, and $\mathcal{G} = \mathcal{U}(C)$.
\end{thm}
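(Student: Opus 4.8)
The plan is to verify the two properties of Definition~\ref{defn:ucs} directly, reusing the machinery from Section~\ref{sec:tract} with the constant-$C$ bound on the tractability parameters $t,r$ built in. First I would fix an arbitrary $G \in \mathcal{U}(C)$ supplied as a query and an arbitrary $\epsilon = \Omega(1/\log^k(nN))$. For the \emph{approximation} property (condition~1), observe that the modified Compute-$G$-Core stores triples $(a_k,b_k,j_k)$, one per hash bucket $k \in [\tau]$, where $a_k$ is a $(1\pm\epsilon')$-approximation of the $L_2$ norm of the substream $D_{H_k}$ with $\epsilon' = 1/\log^{C+1}(N)$, and $b_k$ is the Residual-Approximation output on that substream. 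The query step includes $(G(a_k),j_k)$ in $T$ only when $(1-4\epsilon)G(a_k+b_k+2\epsilon' a_k) \le G(a_k) \le (1+4\epsilon)G(a_k-b_k-2\epsilon' a_k)$. This is exactly the test that, via Lemma~\ref{lem:Gapprox2} applied with $x = m_{j_k}$, $u = a_k$, $v = b_k$, $y = 2\epsilon' a_k$, certifies that $G(a_k)$ is a $(1\pm 4\epsilon)$-approximation of $G(m_{j_k})$ — provided $|m_{j_k} - a_k| \le b_k + 2\epsilon' a_k$, which holds (except with negligible probability) whenever bucket $k$ contains an $(F_2,2)$-heavy element, by Lemmas~\ref{lem:heavyapprox} and~\ref{lem:resapprox} exactly as in the proof of Lemma~\ref{lem:hybridmajor}. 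Since the query test only fires on buckets that did not output $0$ in the parallel separation step — which by Lemma~\ref{lem:majorseparation} forces the presence of an $(F_2,2)$-heavy element up to negligible probability — every pair placed in $T$ satisfies condition~1. Rescaling $\epsilon$ by a factor of $4$ at the outset absorbs the $4\epsilon$.

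For the \emph{heavy element} property (condition~2), suppose $m_i$ is $(G,1)$-heavy with respect to $M(W)$. I would follow the proof of Theorem~\ref{thm:gcore} essentially verbatim: by Lemma~\ref{lem:G-major} (which uses Predicate~(\ref{eqn:tract1})) there is a set $S$ with $|S| = O(\log N)$ and a constant $t$ — here crucially $t \le $ a function of $C$, because $G \in \mathcal{U}(C)$ — such that $\pi_\epsilon^2(m_i) = \Omega(\log^{-t}(nN)\sum_{j \notin S\cup\{i\}} m_j^2)$. Define $\mathcal{L}$ to be the event $\pi_\epsilon^2(m_i) > 20^{10} X$ where $X = \sum_{j\ne i} m_j^2 \one_{H(j)=H(i)}$, and $\mathcal{B}$ the event that no element of $S$ collides with $i$ under $H$. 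Conditioned on $\mathcal{B}$, Markov's inequality plus pairwise independence and the Lemma~\ref{lem:G-major} bound give $P(\neg\mathcal{L}\mid\mathcal{B}) = O(\log^t(nN)/\tau)$, and $P(\neg\mathcal{B}) = O(\log N / \tau)$; with $\tau = \log^{C+2}(nN)/p$ the total is $O(\log^{t}(nN) \cdot p / \log^{C+2}(nN)) + O(p/\log^{C+1}(nN))$, which is $O(p)$ once $C$ is taken large enough relative to the (bounded) $t$. When $\mathcal{L}$ holds, $m_i$ lands alone enough in its bucket $H(i)$ that $\pi_\epsilon(m_i) \ge 20^5 \sqrt{F_2^{res}}$ of that substream, so by the second guarantee of Hybrid-Major (Lemma~\ref{lem:hybridmajor}, whose proof of the second condition uses precisely the bound $\epsilon' = 1/\log^{r+1}(N)$ with $r \le C$, hence the choice $\epsilon' = 1/\log^{C+1}(N)$ in $UCA$) the triple $(a_{H(i)}, b_{H(i)}, j_{H(i)})$ has $j_{H(i)} = i$ and the query test passes, so $(G(a_{H(i)}), i) \in T$. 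That gives $i \in \{j_1,\dots,j_\ell\}$.

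Finally I would bound the space and the failure probability. The space argument is the one from Theorem~\ref{thm:gcore}: constant subroutine depth, polylogarithmically many calls at each level, and at the leaves only smooth-histogram computations for $F_2$ and $L_2$ and direct stream computations, each polylogarithmic; storing $\tau = O^*(1/p)$ triples $(a_k,b_k,j_k)$ adds only $O^*(1)$ more. The total failure probability is the $O(p) = O(\log^{-u}(nN))$ coming from the bucketing analysis plus a union bound over polylogarithmically many negligible events (all the smooth-histogram and median-of-averages estimates), which stays $O(\log^{-u}(nN))$; so $\delta = \Omega(1/\log^u(nN))$ for any $u \ge 0$ as claimed, and $\alpha = 1$ since we assumed $(G,1)$-heaviness throughout.

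The main obstacle — and the only place the universal setting differs substantively from Section~\ref{sec:tract} — is making the parameters $t$ (from Lemma~\ref{lem:G-major}/Predicate~(\ref{eqn:tract1})) and $r$ (from Predicate~(\ref{eqn:tract2})) \emph{uniform} across the whole class $\mathcal{U}(C)$, so that a \emph{single} algorithm with a fixed $\tau = \log^{C+2}(nN)/p$ and a fixed $\epsilon' = 1/\log^{C+1}(N)$ works for every $G \in \mathcal{U}(C)$ simultaneously and the function $G$ enters only at query time. This is exactly what the $t \le C$, $r \le C$ clauses in Definition~\ref{defn:utractability} buy us, but one must check carefully that nothing else in the Section~\ref{sec:tract} proofs silently depends on a $G$-specific constant (e.g. $N_0$, $N_1$ — these only affect "sufficiently large $n,N$" and are harmless) and that the query-time test is a faithful stand-in for the discarded Steps 8–9 of the original Compute-Hybrid-Major.
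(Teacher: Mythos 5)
Your proposal is correct and follows essentially the same route as the paper: it re-runs the Section~\ref{sec:tract} machinery (Lemmas~\ref{lem:resapprox}, \ref{lem:heavyapprox}, \ref{lem:Gapprox}--\ref{lem:Gapprox2}, \ref{lem:hybridmajor} and the Theorem~\ref{thm:gcore} bucketing argument) with the query-time check replacing Steps 8--9, $\epsilon' = 1/\log^{C+1}(N)$ standing in for $1/\log^{r+1}(N)$ via $r \le C$, and $\tau = \log^{C+2}(nN)/p$ absorbing $t \le C$, which is exactly how the paper argues. The only nit is phrasing: $t \le C$ holds by Definition~\ref{defn:utractability} for every $G \in \mathcal{U}(C)$ with $C$ fixed in advance, so it is not a matter of ``taking $C$ large enough relative to $t$.''
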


\begin{proof}
The correctness of $UCA$ essentially follows from the proofs of the results in Section~\ref{sec:tract}.
In particular, Lemma~\ref{lem:resapprox} still holds since Algorithm Residual-Approximation is unchanged.

Lemma~\ref{lem:hybridmajor} still mostly holds without much modification.  Using the same notation
as in the original proof, if there
is no $(F_2,2)$-heavy element, then the proof of Lemma~\ref{lem:hybridmajor} can still be applied
and the modified version of Compute-Hybrid-Major will output $0$
(except with negligible probability).  In such a case, the universal core structure will store the value $0$.
If there is an $(F_2,2)$-heavy element $m_{i_k}$ and the structure stores $(a_k,b_k,i_k)$, then again the
proof applies.  The reason is that, when
querying the universal core structure with a function $G$, we check if
$(1-4\epsilon)G(a_k + b_k + 2 \epsilon' a_k) \leq G(a_k) \leq (1+4\epsilon)G(a_k - b_k - 2 \epsilon' a_k)$,
in which case the proof argues that $G(a_k)$ is a $(1 \pm 4\epsilon)$-approximation of $G(m_{i_k})$.  In
the case that $\pi_{\epsilon}(m_{i_k}) \ge (20)^5\sqrt{F^{res}_2(M(W))}$, the proof still goes through
since we apply Lemma~\ref{lem:heavyapprox} with $\epsilon' = \frac{1}{\log^{C+1}(N)}$, and we have
$|m_{i_k} - a_k| \le 2\sqrt{F^{res}_2(M(W))} +  \epsilon' m_{i_k} \leq 0.01\pi_{\epsilon}(m_{i_k}) +
\frac{1}{\log N}\frac{m_{i_k}}{\log^C(m_{i_k})} \leq 0.01\pi_{\epsilon}(m_{i_k}) +
\frac{1}{\log N}\frac{m_{i_k}}{\log^r(m_{i_k})} \leq .01 \pi_{\epsilon}(m_{i_k}) + \frac{\pi_{\epsilon}(m_{i_k})}{\log N}
\leq .02 \pi_{\epsilon}(m_{i_k})$ (here, similarly to Lemma~\ref{lem:hybridmajor}, $r$ is the constant given by
the definition of universal tractability for $\mathcal{U}(C)$, and hence $r \leq C$).

Finally, we must argue the correctness of Theorem~\ref{thm:gcore}.  Using some notation taken
from the proof, consider an output $c_k = (a_k,b_k,i_k)$ (if $c_k = 0$, the data structure does not output it
to the set $T$) and observe that $G(a_k)$ for any $a_k$ satisfying
$(1-4\epsilon)G(a_k + b_k + 2 \epsilon' a_k) \leq G(a_k) \leq (1+4\epsilon)G(a_k - b_k - 2 \epsilon' a_k)$
is a $(1 \pm 4 \cdot \frac{\epsilon}{4})$-approximation of a distinct entry $G(m_{i_k})$.  Moreover,
if there is a $(G,1)$-heavy element $m_{i_k}$, then we again have
$\pi^2_{\epsilon}(m_{i_k}) = \Omega \left( \log^{-(t+1)}(nN)\sum_{j\notin S\cup\{i_k\}} m_j^2 \right)$.  In
fact, delving into the proof of Lemma~\ref{lem:G-major} (found in~\cite{BO10}), we see that the
specific value of $t$ depends on $G$, and is given by the definition of universal tractability for $\mathcal{U}(C)$.
Since $t \leq C$ and we choose $\tau = \frac{\log^{C+2}(nN)}{p}$, we get the probability of the bad
event $\neg\mathcal{L}$ (using the same notation from Theorem~\ref{thm:gcore}) is bounded by:
$$ \frac{E(X \ | \ \mathcal{B}) 20^{10}}{\pi^2_\epsilon(m_{i_k})} + \frac{O(\log N)}{\tau}
= \frac{20^{10}\log^{t+1}(nN)\sum_{j \notin S\cup\{i_k\}} m_j^2}{\tau \sum_{j \notin S\cup\{i_k\}} m_j^2} +
\frac{O(\log N)}{\tau} \leq p.$$
The rest of the proof goes through in the same way, and hence this gives the theorem.
\end{proof}

We now argue how to use our universal core algorithm $UCA$ as a subroutine to give the main result
of the paper.  The proof of the theorem below can be found in Appendix~\ref{apx:sumcore}, and the argument
follows a similar one found in~\cite{BO13} (we reproduce it in the appendix for completeness).

\begin{thm}\label{thm:usum}
Fix a parameter $C$ and let $\mathcal{U}(C)$ be the set of tractable functions corresponding to the
definition of universal tractability.  Suppose there is a universal core algorithm that makes a single pass over
$D$, uses polylogarithmic memory in $n$ and $N$, and has parameters
$\epsilon = \Omega(1/\log^k(nN))$ (for any $k \geq 0$), $\delta = \Omega(1/\log^{u}(nN))$
(for any $u \geq 0$), $\alpha = 1$, and $\mathcal{G} = \mathcal{U}(C)$.  Then
there is a universal sum algorithm with parameters
$\epsilon = \Omega(1/\log^k(nN))$ (for $k \geq 0$), $\delta = 0.3$,
and $\mathcal{G} = \mathcal{U}(C)$.  The algorithm uses polylogarithmic space in $n$ and $N$ and makes a single
pass over $D$.
\end{thm}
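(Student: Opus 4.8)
The plan is to adapt the $G$-Sum-from-$G$-Core reduction of Theorem~\ref{thm:gsum from gcore} (which follows \cite{BO13}) to the universal setting, exploiting the fact that the subsampling machinery used by that reduction is entirely oblivious to $G$. Concretely, the universal sum algorithm fixes $L = \Theta(\log(nN))$ subsampling levels and $R = \Theta(\log(nN))$ independent repetitions; at level $\ell$ in repetition $r$ it uses a pairwise-independent hash $H_{\ell,r}\colon [n]\to\{0,1\}$ with $\Pr[H_{\ell,r}(i)=1]=2^{-\ell}$ to define the sampled substream $D_{\ell,r}$, and it runs the given universal core algorithm $UCA$ on $D_{\ell,r}$ with parameters $\epsilon'=\Theta(\epsilon)$, $\alpha=1$, $\mathcal{G}=\mathcal{U}(C)$, and $\delta=\Theta(1/\log^u(nN))$ for a large enough constant $u$ (the hypothesis permits any $u\ge 0$). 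It stores all the resulting universal core structures $S_{\ell,r}$. None of this depends on $G$, it is a single pass, and the space is polylogarithmic because $L$ and $R$ are polylogarithmic and each $UCA$ instance uses polylogarithmic space.

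Given a query $G\in\mathcal{U}(C)$: first rescale so that $G(1)=1$ --- this is harmless because $(G,1)$-heaviness and $(1\pm\epsilon)$-approximation are scale-invariant and the stored structures do not depend on $G$ --- so by the universal analogue of Lemma~\ref{lem:tractcube} every nonzero value $G(m_i)$ lies in $[1,N^3]$, hence in one of $O(\log(nN))$ geometric weight classes $S_t=\{i : G(m_i)\in(2^{t-1},2^t]\}$. Query each $S_{\ell,r}$ with $G$ to obtain the set $T_{\ell,r}$ of $(G,1)$-heavy elements of $D_{\ell,r}$ together with $(1\pm\epsilon')$-approximations of their $G$-values. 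Then assemble the estimator exactly as in \cite{BO13}: for each repetition $r$ and each captured pair, attribute the index to the unique level at which that weight class becomes heavy, form $\sum_\ell 2^\ell\cdot(\text{sum of approximate }G\text{-values attributed to level }\ell)$, and take the median over the $R$ repetitions. The true scale $W=\sum_i G(m_i)$ is unknown while the stream is processed, but it can be recovered at query time from the $(\text{level},\text{weight class})$ pairs at which elements are actually captured (equivalently, run the estimator against each of the $O(\log(nN))$ geometric guesses for $W$ and keep the self-consistent one), at only a polylogarithmic overhead. Output the resulting value; by Definitions~\ref{defn:uss} and~\ref{defn:usa} this yields a universal sum structure/algorithm.

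For correctness, consider a weight class $S_t$ with $|S_t|2^t=\Omega(\epsilon W/\log(nN))$ (the classes with smaller total $G$-mass sum to at most $O(\epsilon W)$ in aggregate and may be dropped). Let $\ell_t$ be the calibrated level with $2^{-\ell_t}\approx c'\,2^t/W$. Each $i\in S_t$ survives subsampling at level $\ell_t$ with probability $2^{-\ell_t}$, and conditioned on surviving, Markov's inequality applied to the expected remaining $G$-mass $2^{-\ell_t}(W-G(m_i))\le c'2^t$ shows that with constant probability the rest of $D_{\ell_t,r}$ has total $G$-mass below $G(m_i)$, so $i$ is $(G,1)$-heavy in $D_{\ell_t,r}$ and hence (by the core guarantee, which holds simultaneously for all $S_{\ell,r}$ after a union bound over the $LR=\mathrm{polylog}$ structures with $\delta$ an inverse polylog of large enough exponent) appears in $T_{\ell_t,r}$. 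Since the $R=\Theta(\log(nN))$ repetitions are independent, a Chernoff bound shows the number of elements of $S_t$ captured at level $\ell_t$ concentrates to within $(1\pm\epsilon)$ of its expectation, yielding a $(1\pm O(\epsilon))$-estimate of $|S_t|2^t$ and, summing over $t$, of $W$; replacing exact $G$-values by their $(1\pm\epsilon')$-approximations costs only another $(1\pm O(\epsilon))$ factor, and median-of-repetitions drives the failure probability to $0.3$ (and further to inverse polynomial). The main obstacle is precisely this concentration analysis --- bounding the aggregate contribution of the non-captured ``light'' elements at every level and showing every significant class is captured at its calibrated level despite the core structure guaranteeing only $\alpha=1$ and inverse-polylog reliability --- together with the bookkeeping forced by the fact that $W$, and thus which level is ``calibrated'' for each class, is $G$-dependent and available only at query time; but once the hash functions and level structure are made $G$-oblivious, this is the same analysis as in \cite{BO13}, so it carries over.
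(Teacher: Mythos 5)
Your construction takes a genuinely different route from the paper: you build an Indyk--Woodruff-style level-set estimator (geometric weight classes, calibrated subsampling rates $2^{-\ell_t}\approx 2^t/W$, counting captured heavy hitters and rescaling), whereas the paper simply re-runs the recursive-sketch reduction of \cite{BO13} used for Theorem~\ref{thm:gsum from gcore}: it stores the universal core structures for the nested substreams $D_k=D_{H_1\cdots H_\phi}$ together with the hash bits $h^k_{j_i}$ and the final frequency vector $M(W_\phi)$, and only at query time evaluates $G$ and unrolls $Y_k = 2Y_{k+1}-\sum_{(x_i,j_i)\in Q_k}(1-2h^k_{j_i})x_i$, with correctness inherited wholesale from Theorem~1 of \cite{BO13}. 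The observation you share with the paper --- that the subsampling and the core structures are $G$-oblivious, so all applications of $G$ can be deferred to query time --- is the right one, and your space/one-pass accounting is fine.

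However, your correctness argument has a genuine gap at the heart of the estimator. At the calibrated level $\ell_t$ you argue, via Markov, that a surviving $i\in S_t$ is $(G,1)$-heavy in $D_{\ell_t,r}$ only \emph{with constant probability}. Consequently the number of elements of $S_t$ captured at level $\ell_t$ concentrates (by Chernoff over the $R$ repetitions) around $R\,2^{-\ell_t}\sum_{i\in S_t} q_i$, where $q_i\in[c,1]$ is an unknown, element- and class-dependent conditional capture probability. Concentrating to within $(1\pm\epsilon)$ of \emph{that} expectation does not give a $(1\pm O(\epsilon))$-estimate of $|S_t|2^t$; it gives an estimate that is off by unknown constant factors, and taking medians over repetitions cannot remove this bias. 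To make the level-set route work you would need the conditional capture probability to be $1-O(\epsilon)$ (or an explicitly unbiased estimator), which in the Indyk--Woodruff framework requires a heaviness threshold $\alpha$ scaling polynomially in $1/\epsilon$ and $\log(nN)$ together with a variance (not constant-scale Markov) bound on the residual mass after subsampling --- note that even the paper's reduction does not run the core at $\alpha=1$ directly: Algorithm $G$-Sum invokes the core with $\alpha=\phi^3/\epsilon^2$ and separately reduces that to $\alpha=1$ by standard techniques. Relatedly, your closing claim that ``this is the same analysis as in \cite{BO13}, so it carries over'' is not accurate: \cite{BO13} analyzes the recursive unbiased estimator above, not the level-set/weight-class estimator you construct, so the concentration analysis you are deferring to is really that of \cite{IW05} and has to be redone with an $\epsilon$-dependent heaviness parameter and a careful treatment of the unknown scale $W$; as written, the step from ``captured counts concentrate'' to ``$(1\pm O(\epsilon))$-estimate of $|S_t|2^t$'' fails.
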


\noindent We can reduce the failure probability of the universal sum algorithm to
inverse polynomial via standard methods.  Our main result, Theorem~\ref{thm:mainusum},
follows from Theorem~\ref{thm:ucore} and Theorem~\ref{thm:usum}.

\bibliographystyle{plain}
\bibliography{ZeroOne}

\appendix

\section{$G$-Sum from $G$-Core}\label{apx:sum}
In this section, we prove Theorem~\ref{thm:gsum from gcore}.
As mentioned, the algorithm and correctness follow from~\cite{BO13}.  We include the algorithm
here for completeness, and rephrase it and the results using notation from our paper.

Let $G$ be a tractable function according to Definition~\ref{defn:tractability}, and let $D(n,N)$ be a stream given as input.
We show how to construct an algorithm that solves the $G$-Sum problem by using our algorithm for $G$-Core as
a subroutine.  In particular, let Compute-$G$-Core be our algorithm from Section~\ref{sec:tract} that solves
the $G$-Core problem.  Note that for the output set 
$S = \{g_1',\ldots,g_\ell'\}$ maintained by Compute-$G$-Core, using standard techniques one can easily obtain the explicit set
of indices $\{j_1,\ldots,j_{\ell}\}$ such that $(1-\epsilon)G(m_{j_i}) \leq g_i' \leq (1+\epsilon)G(m_{j_i})$
for each $1 \leq i \leq \ell$.  Hence, we assume that Compute-$G$-Core outputs a set of pairs of the form
$\{(g_1',j_1),\ldots,(g_\ell',j_\ell) \}$.

In the language of~\cite{BO13}, Compute-$G$-Core produces a $(1,\epsilon)$-cover with respect to the
vector $G(M(W)) = (G(m_1),\ldots,G(m_n))$ with probability at least $1-\delta$, where $\epsilon = \Omega(1/\log^k(nN))$
(for any $k \geq 0$) and $\delta = \Omega(1/\log^{u}(nN))$ (for any $u \geq 0$).  Given the tractable function
$G$, our algorithm for $G$-Sum is as follows:

\bigskip

\begin{algorithm}[H]\label{alg:gsum}
\DontPrintSemicolon
Generate $\phi=O(\log(n))$ pairwise independent, uniform zero-one vectors $H_1, \dots, H_\phi: [n] \rightarrow \{0,1\}$,
and let $h_i^k = H_k(i)$.
Let $D_k$ be the substream defined by $D_{H_1H_2\dots H_k}$, and let $G(M(W_k))$ denote
$(G(m_1),\ldots,G(m_n))$ for the substream $D_k$
and window $W$ (where $k \in [\phi]$). \;

Maintain, in parallel, the cores $Q_k = $ Compute-$G$-Core$(D_k, {\phi^3\over \epsilon^2}, \epsilon, {1\over \phi})$ for each $k \in [\phi]$. \;

If $F_0(G(M(W_\phi))) > 10^{10}$, then output $0$. \;

Otherwise, precisely compute $Y_\phi = |G(M(W_\phi))|$. \;

For each $k = \phi-1,\dots, 0$, compute $Y_k = 2Y_{k+1} - \sum_{(g_i', j_i) \in Q_k} (1-2h_{j_i}^k)g_i'$. \;
Output $Y_0$. \;
\caption{$G$-Sum$(D, \epsilon)$}
\end{algorithm}

Note that, in our paper, Compute-$G$-Core$(D,\epsilon,\delta)$ only takes three parameters (the stream $D$, the
error bound $\epsilon$, and the failure probability $\delta$), while the algorithm from~\cite{BO13} assumes
four parameters of the form Compute-$G$-Core$(D,\alpha,\epsilon,\delta)$.  Here, $D$, $\epsilon$, and $\delta$
have the same meaning as in our paper.  The parameter $\alpha$ controls how heavy an element needs to be
(according to the function $G$) in order to necessarily be in the output set of Compute-$G$-Core.  That is,
in the set $T = \{(x_1,j_1),\ldots,(x_\ell,j_\ell)\}$ output by Compute-$G$-Core, if there is an $i$ such that
$m_i$ is $(G,\alpha)$-heavy with respect to $M(W)$, then $i \in \{j_1,\ldots,j_\ell \}$.  In
Section~\ref{sec:tract}, we solve the problem for $\alpha = 1$, but Algorithm $G$-Sum needs the problem
solved for $\alpha = \frac{\phi^3}{\epsilon^2}$. However, using standard techniques, we can reduce the problem of
solving $G$-Core for $\alpha = \frac{\phi^3}{\epsilon^2}$ to the same problem for $\alpha = 1$.

\begin{thm}

For any tractable function $G$, Algorithm $G$-Sum computes a $(1\pm \epsilon)$-approximation of $|G(M(W))|$
except with probability at most $0.3$, where $\epsilon = \Omega(1/log^k(nN))$ for any $k \geq 0$.
The algorithm uses memory that is polylogarithmic in $n$ and $N$.
\end{thm}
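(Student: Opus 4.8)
The plan is to follow the recursive ``halving'' structure of the algorithm: the substream $D_k$ obtained by sub-sampling with $k$ independent fair hash functions keeps each element with probability $2^{-k}$, so in expectation $|G(M(W_k))| \approx 2^{-k}|G(M(W))|$, and the recursion $Y_k = 2Y_{k+1} - \sum_{(g_i',j_{i})\in Q_k}(1-2h^k_{j_i})g_i'$ reconstructs $Y_0 \approx |G(M(W))|$ from the deepest level upward. First I would argue that after $\phi = O(\log n)$ levels of sub-sampling the deepest substream $D_\phi$ has only $O^*(1)$ distinct surviving elements with high probability (so the support $F_0(G(M(W_\phi)))$ is a constant, the test on line 3 passes, and $Y_\phi = |G(M(W_\phi))|$ can be computed exactly in polylogarithmic space); conversely, if $F_0$ is too large the contribution of that level is negligible and outputting $0$ is safe. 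The key identity driving the induction is that, conditioned on $H_1,\dots,H_k$, the elements surviving into $D_k$ split into those with $h^k_{j}=0$ (which stay in $D_{k+1}$) and those with $h^k_{j}=1$ (which drop out), so $|G(M(W_k))| = |G(M(W_{k+1}))| + \sum_{j\in D_{k+1}} h^k_j G(m_j) \cdot(\text{correction})$; rearranged, $|G(M(W_k))| = 2|G(M(W_{k+1}))| - \sum_{j \in D_k}(1-2h^k_j)G(m_j)$, which is exactly the recursion with the true $G(m_j)$ values replaced by the core estimates $g_i'$.

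Next I would control the error introduced by (a) using the core estimates $g_i'$ instead of the exact $G(m_j)$, and (b) only summing over the indices that appear in the core $Q_k$ rather than all of $D_k$. For (a), each $g_i'$ is a $(1\pm\epsilon)$-approximation of $G(m_{j_i})$ by Theorem~\ref{thm:gcore}, so the per-level additive error is at most $\epsilon$ times the sum of the $G$-values that the core reports. For (b), this is where the heavy-element guarantee and the choice $\alpha = \phi^3/\epsilon^2$ come in: any element that is \emph{not} $(G,\alpha)$-heavy in $D_k$ contributes only an $O(\epsilon^2/\phi^3)$ fraction of $|G(M(W_k))|$, and by a Chernoff/second-moment argument over the $\phi$ independent hash functions, the total mass of all such ``light'' missed elements, accumulated over all $\phi$ levels and amplified by the factors of $2$ in the recursion, stays within an $O(\epsilon)$ fraction of $|G(M(W))|$. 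Here I would invoke the standard fact (from~\cite{BO13}) that $G$-Core with $\alpha=1$ can be boosted to $\alpha = \phi^3/\epsilon^2$ with only polylogarithmic overhead, and that $G(m_i)\le m_i^3 \le (nN)^{O(1)}$ for tractable $G$ by Lemma~\ref{lem:tractcube}, so all quantities have polylogarithmic bit-length and the exact computation at the bottom level is cheap.

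Finally I would assemble the probabilities: there are $\phi = O(\log n)$ invocations of Compute-$G$-Core, each failing with probability $1/\phi$, plus the negligible failure probabilities from the smooth-histogram subroutines and the event that $F_0(G(M(W_\phi)))$ behaves atypically; a union bound keeps the total failure probability below the claimed $0.3$ (and it can then be driven to inverse-polynomial by independent repetition and median). The space bound follows since we run $O(\log n)$ parallel copies of the $O^*(1)$-space algorithm Compute-$G$-Core together with $O(\log n)$ counters of polylogarithmic width. I expect the main obstacle to be the error analysis in step (b): bounding the cumulative contribution of light elements that the cores are allowed to miss at every level, in the presence of the exponentially growing multipliers $2^{\phi-k}$ in the recursion, and showing that the variance over the independent sub-sampling stages telescopes correctly so that the $\alpha = \phi^3/\epsilon^2$ heaviness threshold is exactly strong enough to make this sum $O(\epsilon)\cdot|G(M(W))|$ rather than blowing up — this is the crux that forces the particular polynomial relationship between $\alpha$, $\phi$, and $\epsilon$.
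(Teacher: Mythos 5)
Your outline is, in substance, the same route the paper takes -- with the important caveat that the paper does not actually carry it out: its entire proof of this theorem is the single sentence ``follows directly from Theorem 1 in~\cite{BO13},'' i.e.\ the recursive-sketches result, of which Algorithm $G$-Sum is a restatement in the sliding-window notation. Everything you describe (the telescoping identity relating $|G(M(W_k))|$ to $|G(M(W_{k+1}))|$ via the random signs $1-2h_j$, exact computation at the bottom level where $F_0$ has collapsed to a constant, per-level $(1\pm\epsilon)$ errors from the core estimates, a second-moment argument showing the omitted non-heavy terms stay within an $O(\epsilon)$ fraction despite the factors of $2$, the union bound over the $\phi$ core invocations each failing with probability $1/\phi$, and the use of Lemma~\ref{lem:tractcube} to keep bit-lengths polylogarithmic) is exactly the skeleton of that cited theorem, so you are reconstructing the argument the paper outsources rather than taking a different path.

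Two caveats keep this from being a complete proof as written. First, the step you yourself flag as ``the main obstacle'' -- showing that the mass of light elements missed by the cores, amplified by the multipliers $2^{\phi-k}$, accumulates to only $O(\epsilon)\,|G(M(W))|$ -- is precisely the content of~\cite{BO13}'s Theorem 1; to make your write-up self-contained you would need the conditional argument: given levels $k+1,\ldots,\phi$, the omitted terms at level $k$ have mean zero (the signs $1-2h_j$ are balanced), their variance is bounded through the $\phi^3/\epsilon^2$ heaviness threshold, and the level-by-level variances add so that Chebyshev over the whole recursion gives the $O(\epsilon)$ bound; note that a Chernoff bound is not available here, since within a level the $h_j$'s are only pairwise independent. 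Second, be careful with the heaviness parameter: under the paper's Definition of $(G,d)$-heavy ($G(m_i) > d\sum_{j\neq i}G(m_j)$), literally taking $d=\phi^3/\epsilon^2>1$ is a \emph{weaker} reporting guarantee and would not support your step (b); the guarantee the analysis needs (and the one you implicitly use) is that every element contributing more than an $\epsilon^2/\phi^3$ fraction of $|G(M(W_k))|$ appears in $Q_k$, which is what the ``boosting via hashing into $O^*(\phi^3/\epsilon^2)$ buckets'' reduction from the $\alpha=1$ core actually delivers. Also, the $F_0(G(M(W_\phi)))>10^{10}$ branch should be treated as a low-probability failure event (charged to the $0.3$), not as a regime where outputting $0$ is accurate.
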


\begin{proof}
The proof of this theorem follows directly from Theorem 1 in~\cite{BO13}.
\end{proof}

\noindent Note that we can turn the constant failure probability into an inverse polynomial failure probability
using standard techniques.

\section{Universal Sum from Universal Core}\label{apx:sumcore}

In this section, we prove Theorem~\ref{thm:usum}. The algorithm and proof are similar to that of
Appendix~\ref{apx:sum}, except that we need to carry out the argument within our universal framework.
As mentioned, the algorithm and correctness follow from~\cite{BO13}.  We do not rewrite the whole algorithm,
but instead describe the necessary modifications that need to be made from Appendix~\ref{apx:sum}.

Let $D(n,N)$ be a stream given as input to our universal sum algorithm.
Let $UCA$ be our universal core algorithm from Theorem~\ref{thm:ucore}, Section~\ref{sec:universality},
the parameters of which are specified in our universal sum algorithm description.

\paragraph{Universal Sum Algorithm:} We describe the modifications that need to be made to Algorithm
$G$-Sum from Appendix~\ref{apx:sum}.

In Step 2, instead we need to maintain and store the
output $Q_k = UCA$ with parameters $\alpha = \frac{\phi^3}{\epsilon^2}$, $\epsilon$ (i.e., the one
given as input to our universal sum algorithm), $\delta = \frac{1}{\phi}$, and
$\mathcal{G} = \mathcal{U}(C)$ for each $k \in [\phi]$ (in the $k^{th}$ parallel iteration, $UCA$ is
given the stream $D_k$ as input).  As in Appendix~\ref{apx:sum}, we construct a universal core
structure for $\alpha = 1$, but we can reduce the problem of $\alpha = \frac{\phi^3}{\epsilon^2}$ to $\alpha = 1$.
Note that $Q_k$ is of the form $\{(a_1,b_1,j_1),\ldots,(a_\ell,b_\ell,j_\ell) \}$ ($Q_k$ may have $0$'s as well,
which we simply ignore).  For each such triple $(a_i,b_i,j_i)$, we also store the value of $h^k_{j_i} = H_k(j_i)$.

In Step 3, instead we check if $F_0(M(W_\phi)) \leq 10^{10}$, and if so we store $M(W_\phi)$ (recall $M(W_\phi)$
denotes the frequency vector $(m_1,\ldots,m_n)$ for the substream $D_\phi$ induced by $W$).  We remove Steps
4, 5, and 6.

\paragraph{Querying the Structure:} Now, given a function $G \in \mathcal{U}(C)$, we explain how to query
the universal sum structure output by our universal sum algorithm to approximate $|G(M(W))|$.  In particular, for each $k$ we first
query the universal core structure output by $UCA$ to get a set $Q'_k = \{(x_1,j_1),\ldots,(x_{\ell'},j_{\ell'})\}$.
Then, we compute $Y_\phi = |G(M(W_\phi))|$ and, for each $k = \phi - 1,\ldots,0$, we recursively compute
$Y_k$ according to:

$$Y_k = 2Y_{k+1} - \sum_{(x_i, j_i) \in Q_k} (1-2h_{j_i}^k)x_i.$$

\noindent Once each $Y_k$ has been computed for $0 \leq k \leq \phi$, we output $Y_0$.

\begin{thm}
Fix a parameter $C$ and let $\mathcal{U}(C)$ be the set of tractable functions corresponding to the
definition of universal tractability.  There is a universal sum algorithm with parameters
$\epsilon = \Omega(1/\log^k(nN))$ (for $k \geq 0$), $\delta = 0.3$,
and $\mathcal{G} = \mathcal{U}(C)$.  The algorithm uses polylogarithmic space in $n$ and $N$ and makes a single
pass over $D$.  When querying the universal sum structure (output by the universal sum algorithm)
with a function $G \in \mathcal{U}(C)$, it outputs a $(1 \pm \epsilon)$-approximation of $|G(M(W))|$ except
with probability at most $0.3$.
\end{thm}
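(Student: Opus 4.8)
The plan is to reduce the universal sum statement to the analysis already carried out for the (non-universal) $G$-Sum reduction in Appendix~\ref{apx:sum}, invoking the universal core guarantee from Theorem~\ref{thm:ucore} in place of Theorem~\ref{thm:gcore}. First I would fix an arbitrary $G \in \mathcal{U}(C)$ and observe that, once the universal sum structure has been queried with $G$, each queried core $Q'_k$ is exactly a $(1\pm\epsilon)$-cover of $G(M(W_k))$ with respect to the substream $D_k$, with the $(G,\alpha)$-heavy guarantee for $\alpha=\phi^3/\epsilon^2$ (the reduction from $\alpha=1$ to $\alpha=\phi^3/\epsilon^2$ is the same standard amplification used in Appendix~\ref{apx:sum}, applied to $UCA$ rather than Compute-$G$-Core). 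This is the only place where universality is used: the structure is built obliviously, but the evaluation map $G(a_k),\,G(a_k\pm b_k\pm 2\epsilon' a_k)$ and the threshold test are applied at query time, and Theorem~\ref{thm:ucore} guarantees that for \emph{every} $G\in\mathcal{U}(C)$ simultaneously (after the stream is fixed) the resulting set satisfies Definition~\ref{defn:ucs}. Once this is established, the recursion $Y_k = 2Y_{k+1} - \sum_{(x_i,j_i)\in Q'_k}(1-2h^k_{j_i})x_i$ is identical to the one in Algorithm $G$-Sum, and its correctness is exactly Theorem~1 of~\cite{BO13}: the expectation identity $E[|G(M(W_k))|\mid \text{substreams above}] = 2|G(M(W_{k+1}))| - (\text{contribution of heavy elements}) + (\text{error from light part})$, telescoped down from level $\phi$, yields a $(1\pm\epsilon)$-approximation of $|G(M(W))|$ provided the heavy elements at each level are recovered (which the cover guarantees) and the light part contributes a vanishing fraction (which $\alpha=\phi^3/\epsilon^2$ and $F_0\le 10^{10}$ at the bottom level ensure).

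The key steps, in order, are: (1) state that the query procedure applied to the stored $c_k$'s reproduces, for the fixed $G$, precisely the output that a (non-universal) run of Compute-$G$-Core on $G$ would have produced — here I would point back to the proof of Theorem~\ref{thm:ucore}, where each stored triple $(a_k,b_k,j_k)$ together with the threshold test $(1-4\epsilon)G(a_k+b_k+2\epsilon' a_k)\le G(a_k)\le(1+4\epsilon)G(a_k-b_k-2\epsilon' a_k)$ reproduces the behaviour of the original Compute-Hybrid-Major with $\epsilon'=1/\log^{C+1}(N)$, which is legitimate because $r\le C$ for $G\in\mathcal{U}(C)$; (2) note that the choice $\tau=\log^{C+2}(nN)/p$ absorbs the $G$-dependent exponent $t\le C$ from Lemma~\ref{lem:G-major}, so the failure probability of each $Q'_k$ is $O(1/\phi)$; (3) apply the $\alpha$-amplification so $Q'_k$ covers all $(G,\phi^3/\epsilon^2)$-heavy elements; (4) invoke~\cite[Theorem~1]{BO13} verbatim on the telescoping recursion to get the $(1\pm\epsilon)$-approximation with failure probability $\le 0.3$; (5) account for space: $\phi=O(\log n)$ parallel copies of the polylogarithmic-space $UCA$, plus the bottom-level vector $M(W_\phi)$ which has $F_0\le 10^{10}=O(1)$ nonzero coordinates when we do not abort, is polylogarithmic; (6) observe the recursion is a query-time postprocessing step using no additional stream access, so one pass suffices.

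The main obstacle I expect is step (1): making rigorous the claim that a single obliviously-collected structure serves \emph{all} $G\in\mathcal{U}(C)$ at once. The subtlety is that the original algorithm Compute-Hybrid-Major uses $G$ both to set $\epsilon'$ (via the parameter $r$ from tractability) and in its final threshold test, and uses $G$ again inside Lemma~\ref{lem:G-major} (via the parameter $t$). In the universal setting we must commit to $\epsilon'=1/\log^{C+1}(N)$ and $\tau=\log^{C+2}(nN)/p$ \emph{before} seeing $G$; correctness then rests entirely on the uniform bounds $r\le C$ and $t\le C$ built into Definition~\ref{defn:utractability}. So the heart of the proof is checking that every inequality in the proofs of Lemma~\ref{lem:hybridmajor} and Theorem~\ref{thm:gcore} that previously used the $G$-specific $r$ or $t$ still holds when these are replaced by the universal upper bound $C$ — which it does, since the inequalities only get slack from a larger exponent in a logarithm. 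Everything else — the telescoping analysis, the $\alpha$-amplification, the space bound — is a direct transcription of~\cite{BO13} and Appendix~\ref{apx:sum} and should go through with only notational changes.
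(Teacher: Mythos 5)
Your proposal takes essentially the same route as the paper: the paper also builds the universal sum structure by substituting $UCA$ for Compute-$G$-Core inside the Appendix~\ref{apx:sum} reduction (with the $\alpha=\phi^3/\epsilon^2$ amplification), relies on Theorem~\ref{thm:ucore} with the uniform bounds $r,t\le C$ absorbed by $\epsilon'=1/\log^{C+1}(N)$ and $\tau=\log^{C+2}(nN)/p$, and then invokes Theorem~1 of~\cite{BO13} verbatim for the telescoping recursion, exactly as you outline. One minor caution: Definition~\ref{defn:ucs} and Theorem~\ref{thm:ucore} give a per-query guarantee (probability $1-\delta$ for each fixed $G$), not a simultaneous guarantee over all $G\in\mathcal{U}(C)$ as your parenthetical suggests, but the per-query guarantee is all that the stated theorem requires.
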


\begin{proof}
The proof of this theorem follows directly from Theorem 1 in~\cite{BO13}.
\end{proof}

\end{document}